\newtheorem{theorem}{Theorem}
\newtheorem{lemma}[theorem]{Lemma}
\newtheorem{corollary}[theorem]{Corollary}
\theoremstyle{definition}
\newtheorem{definition}[theorem]{Definition}
\newcommand{\Os}{O^{*}\!}
\newcommand{\N}{\mathbb{N}}
\newcommand{\Qnn}{\mathbb{Q}_{\geqslant 0}}
\newcommand{\pf}{\operatorname{pf}}
\newcommand{\coef}{\operatorname{coef}}
\newcommand{\sgn}{\operatorname{sgn}}
\renewcommand\footnoterule{\kern-3pt \hrule width 2in height 0.25pt \kern 2.75pt}
\newenvironment{problem}[1]{
  \vskip 1em \small \begin{tabular}{@{}ll} \multicolumn{2}{@{}l}{\textsf{\textbf{#1}}}\\[3pt]
}{
  \end{tabular} \vskip 1em
}
\title{Tight Vector Bin Packing with Few Small Items via Fast Exact Matching in Multigraphs}
\author{Alexandra Lassota \\ {\normalsize EPFL} \and Aleksander Łukasiewicz \\ {\normalsize University of Wrocław} \and Adam Polak \\ {\normalsize EPFL}}
\date{}
\begin{document}

\maketitle

\begin{abstract}
We solve the Bin Packing problem in $O^*(2^k)$ time, where $k$ is the number of items less or equal to one third of the bin capacity. This parameter measures the distance from the polynomially solvable case of only large (i.e., greater than one third) items. Our algorithm is actually designed to work for a more general Vector Bin Packing problem, in which items are multidimensional vectors. We improve over the previous fastest $O^*(k! \cdot 4^k)$ time algorithm.

Our algorithm works by reducing the problem to finding an exact weight perfect matching in a (multi-)graph with $O^*(2^k)$ edges, whose weights are integers of the order of $O^*(2^k)$. To solve the matching problem in the desired time, we give a variant of the classic Mulmuley-Vazirani-Vazirani algorithm with only a~linear dependence on the edge weights and the number of edges -- which may be of independent interest.

Moreover, we give a tight lower bound, under the Strong Exponential Time Hypothesis (SETH), showing that the constant $2$ in the base of the exponent cannot be further improved for Vector Bin Packing.

Our techniques also lead to improved algorithms for Vector Multiple Knapsack, Vector Bin Covering, and Perfect Matching with Hitting Constraints.
\end{abstract}

\section{Introduction}
NP-hard problems often have special cases that can be solved in polynomial time, e.g., Vertex Cover is tractable in graphs with the Kőnig property, Dominating Set is tractable in trees, and Longest Common Subsequence is tractable in permutations.
Many of these problems remain (fixed-parameter) tractable with a distance from the polynomially solvable case taken as a parameter, e.g., Vertex Cover Above Matching~\cite{RazgonO08}, Dominating Set in bounded treewidth graphs~\cite{AlberBFKN02}, or Longest Common Subsequence parameterized by the maximum occurrence number~\cite{GuoHN04}.
In parameterized complexity, this concept is sometimes dubbed \emph{distance from triviality}~\cite{GuoHN04}.

In the Bin Packing problem, we are given $n$ \emph{items} from $\Qnn$, and we have to pack them into the smallest possible number of unit-sized \emph{bins}. It is a classic strongly NP-hard problem. When all items are \emph{large}, i.e., greater than $\nicefrac{1}{3}$, then no three items can fit into a single bin and the problem reduces to the Maximum Matching problem, and hence it can be solved in polynomial time~\cite{Edmonds65}.

Bannach et al.~\cite{Bannach20} are the first to study Bin Packing parameterized by the number $k$ of \emph{small} (i.e., $\leqslant \nicefrac{1}{3}$) items -- which is the distance from the above tractable case.
They give algorithms running in randomized $\Os(k! \cdot 4^k)$ time, and deterministic $\Os\big((k!)^2 \cdot 2^k\big)$ time.\footnote{We use $\Os(\cdot)$ notation to suppress factors polynomial in the input size $n$, i.e., $\Os(f(k)) = f(k) \cdot n^{O(1)}$\!.}
Their randomized algorithm works even for a more general Vector Bin Packing problem, in which items are $d$-dimensional vectors from $\Qnn^d$, and a set of items fits into a bin if their coordinate-wise sum does not exceed $1$ in any coordinate.
(The notion of a small item is more complex in the multidimensional case; see Section~\ref{sec:definition} for the definition.)

We improve upon their result by giving an $\Os(2^k)$ randomized time algorithm for Vector Bin Packing.
We complement it with a matching conditional lower bound, showing that the constant $2$ in the base of the exponent cannot be further improved, unless the Strong Exponential Time Hypothesis ({\small SETH}) fails.

Our algorithm works by reducing the problem to finding a perfect matching of a~given total weight in an edge-weighted (multi-)graph. The graph has only $O(n)$ nodes, but can have up to $O(2^kn^2)$ edges, whose weights are integers of the order of $2^k \cdot k$. To solve the matching problem in the desired $\Os(2^k)$ time, we give a variant of the classic Mulmuley-Vazirani-Vazirani algorithm~\cite{Mulmuley1987-mc} with only a linear dependence on the edge weights and the number of edges -- which may be of independent interest.

Our techniques also lead to improved algorithms for the two other problems studied by Bannach et al.~\cite{Bannach20}, i.e., the Vector Multiple Knapsack and Vector Bin Covering problems, as well as for the Perfect Matching with Hitting Constraints problem, studied by Marx and Pilipczuk~\cite{Marx14}.

\subsection{Vector Bin Packing with Few Small Items}
\label{sec:definition}

First, let us formally define Vector Bin Packing as a decision problem. We remark that (Vector) Bin Packing is also often studied as an optimization problem -- especially in the context of approximation algorithms -- but one can always switch between the two variants via binary search, loosing at most a factor of $O(\log n)$ in the running time.

\begin{problem}{Vector Bin Packing}
  Given:
    & a set of $n$ items $V = \{v_1, \ldots, v_n\} \subseteq \Qnn^d$, \\
    & and an integer $\ell \in \mathbb{Z}_+$ denoting the number of unit-sized bins. \\[3pt]
  Decide:
    & if the items can be partitioned into $\ell$ bins $B_1 \cup \cdots \cup B_\ell = V$ such that \\
    & $\sum_{v \in B_i} v[j] \leqslant 1$ for every bin $i \in [\ell]$ and every dimension $j \in [d]$.\footnotemark
\end{problem}
\footnotetext{We use $[n]$ to denote the set of integers $\{1,2,\ldots,n\}$.}

Note that the assumption that bins are unit-sized is without loss of generality, as one can always independently scale each dimension in order to meet that constraint. We can also safely assume that $V$ is a set, as one can handle multiple occurrences of the same item by introducing one extra dimensions with negligibly small but unique coordinates.

Unlike in the one-dimensional Bin Packing problem, where a small item can be defined simply as smaller or equal to $\nicefrac{1}{3}$, we use a more complex definition, introduced by Bannach et al.~\cite{Bannach20}.
Let $V \subseteq \Qnn^d$ be a set of $d$-dimensional items. We say that a subset $V' \subseteq V$ is \textbf{3-incompatible} if no three distinct items from $V'$ fit into a unit-sized bin, i.e., for every distinct $u, v, w \in V'$ there exists a dimension $i \in [d]$ such that $u[i]+v[i]+w[i] > 1$.
Now we can define the parameterized problem that we study.

\begin{problem}{Vector Bin Packing with Few Small Items}
  Parameter:
    & the number of \emph{small} items $k$. \\[3pt]
  Given:
    & a set of $n$ items $V = \{v_1, \ldots, v_n\} \subseteq \Qnn^d$, \\
    & a subset of $k$ items $V_S \subseteq V$ such that $V_L = V \setminus V_S$ is \textbf{3-incompatible}, \\
    & and an integer $\ell \in \mathbb{Z}_+$ denoting the number of unit-sized bins. \\[3pt]
  Decide:
    & if the items can be partitioned into $\ell$ bins $B_1 \cup \cdots \cup B_\ell = V$ such that \\
    & $\sum_{v \in B_i} v[j] \leqslant 1$ for every bin $i \in [\ell]$ and every dimension $j \in [d]$.
\end{problem}

We say that items in $V_S$ are \emph{small}, and the remaining items in $V_L = V \setminus V_S$ are \emph{large}. Note that we assume that a subset of small items is specified in the input. This way we can study the complexity of the packing problem independently of the complexity of finding a (smallest) subset of small items. This is similar, e.g., to the standard practice for treewidth parameterization, where one assumes that a suitable tree decomposition is given in the input (see, e.g., \cite{CyganFKLMPPS15}). We remark that if only the set of all items $V$ is given, a smallest possible subset of small items can be found in $\Os(2.0755^k)$ time~\cite{Wahlstrom07} via a reduction to the 3-Hitting Set problem~\cite{Bannach20}.

\subsection{Our results}

Our main result is an $\Os(2^k)$ time randomized algorithm for Vector Bin Packing with Few Small Items. The algorithm consists of two parts: reducing the packing problem to a matching problem, and solving the matching problem. More formally, we first prove the following.

\begin{restatable}{lemma}{lempacktomatch}
\label{lem:packtomatch}
An $n$-item instance of Vector Bin Packing with $k$ small items can be reduced, in deterministic time $O(2^kn^2kd)$, to the problem of finding an exact-weight perfect matching in a (multi-)graph. The graph has $O(n)$ vertices, $O(2^kn^2)$ edges, and non-negative integer edge weights that do not exceed $O(2^kk)$. The target exact total weight of a matching is $O(2^kk)$.
\end{restatable}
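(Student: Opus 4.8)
The plan is to exploit the one structural handle the hypothesis gives us: since $V_L$ is \textbf{3-incompatible}, \emph{every bin of every valid packing contains at most two large items}. Consequently a packing into $\ell$ bins is precisely a partition of $V_L$ into at most $\ell$ groups of size one or two, each group augmented by an arbitrary subset of $V_S$, together with some bins holding only small items (and possibly empty bins), subject solely to the coordinate-wise capacity constraint. The reduction will therefore (a) model the grouping of the large items as a perfect matching, (b) record on each matching edge the subset of small items sharing that bin, and (c) use edge weights to certify that the recorded subsets partition $V_S$. First, though, one should clear the degenerate cases: output a fixed no-instance if some $v_i$ has a coordinate greater than $1$, or if $2\ell<|V_L|$ (the large items alone already force $\lceil |V_L|/2\rceil$ bins); output a fixed yes-instance if $\ell\ge n$ (one item per bin). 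Afterwards we may assume $|V_L|\le 2\ell$ and $\ell<n$, and the whole construction is deterministic.

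For (a)--(b): form a multigraph $G$ with a vertex $x_v$ for each large item $v\in V_L$ and exactly $2\ell-|V_L|\ge 0$ extra \emph{dummy} vertices, so $|V(G)|=2\ell=O(n)$. The bookkeeping that makes this work is that the dummy count does not depend on the packing: if a packing has $a$, $b$, $c$ bins with two, one, zero large items respectively, then $2a+b=|V_L|$ and $a+b+c=\ell$, so $b+2c=2\ell-|V_L|$ is fixed, and a two-large bin becomes an edge between two $x$-vertices, a one-large bin an edge from its $x$-vertex to a dummy, and a small-only or empty bin an edge between two dummies, consuming exactly those $2\ell-|V_L|$ dummies. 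Now, for every \emph{candidate bin} --- at most two large items together with a subset $T\subseteq V_S$ whose vectors sum to at most $1$ in every coordinate --- add one edge of the appropriate type carrying the label $T$. There are $O(n^2)$ pairs of large-or-dummy vertices and $2^k$ subsets $T$, hence $O(2^kn^2)$ edges; each feasibility check sums at most $k+2$ vectors of length $d$, costing $O(kd)$, so the construction takes $O(2^kn^2kd)$ time.

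For (c): assign small item $i\in[k]$ the weight $2^k+2^{i-1}$, let an edge weigh the total weight of the small items it carries, and set the target to $W=k\cdot 2^k+(2^k-1)$; this also bounds every individual edge weight, so both are $O(2^kk)$. If the edge labels partition $V_S$ the matching weighs $\sum_{i=1}^k(2^k+2^{i-1})=W$. Conversely, in a weight-$W$ perfect matching (which has exactly $\ell$ edges) let $c_i$ be the number of its edges whose label contains $i$; its weight is $2^k\sum_i c_i+D$ with $D:=\sum_i c_i2^{i-1}\ge 0$, so $D\equiv W\equiv -1\pmod{2^k}$. A short induction on $k$ shows that $D\equiv -1\pmod{2^k}$ already forces $\sum_i c_i\ge k$ (the low $k$ bits of $D$ are all $1$); on the other hand $D\ge 0$ and the weight equation give $\sum_i c_i\le k+(2^k-1)/2^k<k+1$, hence $\sum_i c_i=k$, $D=2^k-1$, and tracing the equality case forces $c_i=1$ for every $i$. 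The role of the multiplier $2^k$, being larger than the sum of all binary parts, is exactly to pin down the number of small-item occurrences \emph{before} the binary part forces a genuine partition.

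Finally one assembles the two directions: a weight-$W$ perfect matching of $G$ puts each large item into exactly one edge, each small item into exactly one edge's label, and --- by construction --- each edge is a valid bin, so reading one bin per edge yields a valid packing into $\ell$ bins; the reverse reading turns any valid packing into such a matching. I expect the two genuinely delicate points to be the dummy-count bookkeeping that makes the packing$\leftrightarrow$matching correspondence exact (especially the small-only and empty bins and the parity of $2\ell-|V_L|$), and fixing the weight scheme together with the ``no spurious carries'' argument sketched above; everything else is routine arithmetic.
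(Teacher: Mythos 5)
Your proposal is correct and follows essentially the same route as the paper: vertices for large items plus $2\ell-|V_L|$ dummies, one edge per (pair, feasible small-item subset), and the identical weight scheme $|T|\cdot 2^k+\sum_{i\in T}2^{i-1}$ with target $k\cdot 2^k+(2^k-1)$. The only divergence is cosmetic — you argue the weight lemma by an induction peeling off the low bit (plus an equality-case trace), whereas the paper establishes $p_i=\sum_{j\le i}c_j\ge i$ via a bit-counting observation and closes the equality via Abel summation — and you explicitly dispatch the degenerate cases ($2\ell<|V_L|$, $\ell\ge n$) that the paper leaves implicit.
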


The above matching problem is dubbed Exact Matching, and is known to be in randomized\footnote{It is a big open problem to derandomize the algorithm, see, e.g., \cite{SvenssonT17}.} (pseudo-)polynomial time since the Mulmuley-Vazirani-Vazirani algorithm~\cite{Mulmuley1987-mc}. The algorithm directly solves the $0$/$1$ weights variant of Exact Matching in simple graphs. A prior reduction of Papadimitriou and Yannakakis~\cite{Papadimitriou1982-qg} handles arbitrary non-negative integer edge weights and multiple parallel edges. The reduction replaces each edge of weight $w$ by a path of length $2w+1$ with alternating $0$/$1$ edge weights.

The reduction multiplies the number of vertices by the edge weights and by the number of edges. Further, Mulmuley-Vazirani-Vazirani is not a linear time algorithm. Hence, this would give us only a $2^{O(k)}n^{O(1)}$ time algorithm for Vector Bin Packing. This is already an improvement over the previous factorial time algorithm, but still not our desired $2^k n^{O(1)}$ running time.

There are more direct and faster ways to solve the general Exact Matching problem than going through the Papadimitriou-Yannakakis reduction.
It seems folklore to handle arbitrary edge weights by replacing a monomial $x$, corresponding to a weight-one edge in the Mulmuley-Vazirani-Vazirani algorithm, with $x^w$, where $w$ is the edge weight.
It remains to handle multiple parallel edges.
A crucial part of the algorithm is the so-called \emph{isolation lemma}.
It assigns random \emph{costs} to edges, ensuring that the minimum cost perfect matching of the target weight is unique, and hence it cannot cancel out in the algebraic computations.
The range of costs, required to ensures that property, on one hand depends on the number of edges, and on the other hand, determines the bitsize of the costs, on which the algorithm later needs to do arithmetic.

Due to the number of edges in Lemma~\ref{lem:packtomatch}, a direct application of isolation lemma would lead to an $\Os(4^k)$ time algorithm. To mitigate this issue, we carefully apply isolation lemma to pairs of vertices, and hence the number of edges appears in the running time only as a linear additive factor.

\begin{restatable}{theorem}{thmmatching}
\label{thm:matching}
  Given an edge-weighted multigraph with $n$ nodes and $m$ edges, and an integer~$t$,
  a randomized Monte-Carlo algorithm can decide whether there is
  a perfect matching of total weight exactly $t$ in $\widetilde O(t \cdot n^8 + m)$ time.
\end{restatable}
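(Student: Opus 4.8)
The plan is to follow the algebraic blueprint of Mulmuley-Vazirani-Vazirani, but to apply the isolation lemma at the level of \emph{pairs of vertices} rather than at the level of individual parallel edges, so that the number of edges $m$ never enters the cost magnitude (and hence the bit-complexity of the arithmetic), but only contributes an additive $O(m)$ term for reading the input and for collapsing parallel edges. First I would preprocess: for each unordered pair $\{u,v\}$ of vertices, scan all parallel edges between $u$ and $v$; this partitions the edge set into at most $\binom{n}{2}$ ``bundles''. The isolation lemma is then invoked on the ground set of these $O(n^2)$ bundles, assigning each bundle $\{u,v\}$ an independent uniform random cost $c_{uv} \in \{1,\dots,2n^2\}$ (or a similar polynomial range). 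Crucially, this is enough: any perfect matching selects at most one edge from each bundle, so a matching's cost is a sum of bundle costs, and two distinct perfect matchings that use exactly the same set of bundles differ only in \emph{which} parallel edge they pick inside some bundle -- but such matchings have the same vertex-pairing and hence, for the purpose of the exact-weight question restricted to a fixed ``pattern'' of bundle choices, can be handled separately. More precisely, I would stratify the computation by saying: it suffices to detect, for each possible target weight $t$ and each bundle-assignment of edge weights, whether some perfect matching realizes it; and within a fixed bundle, all parallel edges of the same weight are interchangeable, so we may keep only one edge of each distinct weight per bundle, reducing the multiplicity of each bundle to at most $t+1$ and the total edge count to $O(n^2 t)$ before the algebraic step.

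Next I would set up the algebraic computation. Introduce a variable $y$ tracking total weight and (implicitly) a variable $z$ tracking total cost; build the weighted Tutte matrix $A$ where the entry for an edge $e=\{u,v\}$ of weight $w_e$ lying in bundle $b$ is $\pm x_e \cdot y^{w_e} \cdot z^{c_b}$, with $x_e$ a random value from a field of size $\mathrm{poly}(n)$ (to apply Schwartz-Zippel in place of formal indeterminates) and the sign chosen antisymmetrically. The Pfaffian $\pf(A)$ is a polynomial in $y$ and $z$ whose monomials correspond to perfect matchings, the $y$-degree recording total weight and the $z$-degree recording total cost. By the isolation lemma, with probability at least $1/2$ the minimum-cost perfect matching of total weight exactly $t$ is unique (if one exists), so the coefficient of $y^t z^{c^\star}$, for the smallest cost $c^\star$ with $[y^t z^{c^\star}]\pf(A)\neq 0$, does not vanish. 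Hence the algorithm computes $\pf(A)$ modulo $y^{t+1}$ and modulo a prime $p=\mathrm{poly}(n)$, treating it as a polynomial in $z$ of degree $O(n^2 \cdot n^2)=O(n^4)$, and reports YES iff the $y^t$-coefficient is a nonzero polynomial in $z$.

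For the running time I would analyze the Pfaffian computation via Gaussian elimination over the ring $\mathbb{F}_p[y,z]/(y^{t+1})$: the matrix is $n\times n$, so we do $O(n^3)$ ring operations; each ring element is a bivariate polynomial truncated to $y$-degree $\le t$ and $z$-degree $O(n^4)$, so it has $O(t\cdot n^4)$ coefficients, and a multiplication of two such elements costs $\widetilde O(t \cdot n^4)$ via FFT. This gives $\widetilde O(t \cdot n^7)$ for the eliminations, and accounting for the $O(n)$ choices implicit in clearing denominators / the extra factor from working with the adjoint rather than a single determinant yields the claimed $\widetilde O(t\cdot n^8 + m)$; the additive $m$ is exactly the edge-bundling preprocessing. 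The main obstacle I anticipate is controlling the $z$-degree: a naive isolation over individual edges forces a cost range of $\Theta(m)$, blowing the $z$-degree up to $\Theta(m)$ and the running time to $\widetilde O(t\cdot m \cdot \mathrm{poly}(n))$, i.e.\ $\Os(4^k)$ in the bin-packing application. The fix -- isolating over vertex pairs and then arguing that parallel edges of equal weight within a bundle are interchangeable (so the uniqueness we actually need is uniqueness of the \emph{bundle-set}, not of the edge-set) -- is the crux, and the delicate point there is to verify that the isolation lemma's guarantee on $O(n^2)$ bundles genuinely suffices to prevent cancellation of the $y^t$-coefficient; I would prove this by showing that among all weight-$t$ perfect matchings, grouping by their bundle-set, the minimum-cost group is unique w.h.p., and within that group all matchings contribute the same monomial $y^t z^{c^\star}$ times a sum of $\pm x_e$ terms which is a nonzero polynomial in the $x_e$'s, hence nonzero w.h.p.\ by Schwartz-Zippel.
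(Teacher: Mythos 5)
Your central insight is exactly the one the paper uses: apply the isolation lemma to the $\binom{n}{2}$ vertex pairs (``bundles'') rather than to individual edges, so that the cost range -- and hence the bit-complexity of the algebra -- depends on $n$ rather than on $m$, and $m$ survives only as an additive preprocessing term. That part of your argument is sound, and your justification that the $y^t z^{c^\star}$-coefficient is a nonzero multilinear polynomial in the $x_e$'s (the candidate matchings within the unique minimum-cost $K_n$-matching share one sign and give distinct monomials) is correct.

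Where you diverge from the paper, and where you leave a real gap, is the algebraic machinery. The paper does not introduce per-edge randomness at all; it collapses a bundle deterministically into $\lambda^{c(\{i,j\})}\sum_{e} x^{w(e)}$ with $\lambda = 2m^n$, and then argues via base-$\lambda$ digits: the count of weight-$t$ matchings inside the isolated $K_n$-matching is at most $m^{n/2} < \lambda$, so the unique minimum-cost contribution cannot be cancelled by higher-cost ones. This replaces your Schwartz--Zippel step by a purely combinatorial divisibility argument, avoiding one layer of randomness. Your alternative -- random $x_e$ over a small field plus Schwartz--Zippel -- is also workable (and in fact the paper's appendix uses Schwartz--Zippel in a different algorithm), so that is a legitimate variation. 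But your claim that the Pfaffian can be computed by ``Gaussian elimination over the ring $\mathbb{F}_p[y,z]/(y^{t+1})$'' is not justified: this quotient ring is not a field (e.g.\ $y$ is a zero divisor), pivots need not be invertible, and the usual fix of clearing denominators causes exactly the intermediate-expression swell that would blow up the truncated degree bound you rely on. The paper sidesteps this by invoking the \emph{division-free} Pfaffian algorithm of Mahajan, Subramanya and Vinay, which works over an arbitrary commutative ring with $O(n^4)$ ring operations; your vague appeal to ``the adjoint / clearing denominators'' does not substitute for that, and is the step you would have to fix to make the running-time claim stick. Two smaller points: the $z$-degree of any Pfaffian monomial is $O(n^3)$, not $O(n^4)$, since a matching uses $n/2$ bundles each of cost at most $2\binom{n}{2}$; and your preprocessing step of keeping only one edge per $(\text{bundle},\text{weight})$ pair is a harmless simplification that the paper does not need, since its coefficient bound already tolerates parallel edges of equal weight.
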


Lemma~\ref{lem:packtomatch} and Theorem~\ref{thm:matching} together imply our main result.

\begin{theorem}
\label{thm:binpacking}
There is a randomized Monte Carlo algorithm solving Vector Bin Packing with Few Small Items in $\Os(2^k)$ time.
\end{theorem}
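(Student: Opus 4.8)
The plan is essentially to compose the two results stated above, so I will describe how the pieces fit together and why the running time comes out to $\Os(2^k)$. First I would take an instance of Vector Bin Packing with Few Small Items --- a set $V$ of $n$ items from $\Qnn^d$, a designated subset $V_S$ of $k$ small items with $V_L = V \setminus V_S$ being 3-incompatible, and a bin count $\ell$ --- and apply the reduction of Lemma~\ref{lem:packtomatch}. This runs in deterministic time $O(2^k n^2 k d)$ and outputs an equivalent Exact Matching instance: a multigraph $H$ on $O(n)$ vertices, with $O(2^k n^2)$ edges, non-negative integer edge weights bounded by $O(2^k k)$, together with a target weight $t = O(2^k k)$, such that the original instance is a yes-instance if and only if $H$ admits a perfect matching of total weight exactly $t$.

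Next I would feed $(H, t)$ to the algorithm of Theorem~\ref{thm:matching}. Substituting $t = O(2^k k)$, $|V(H)| = O(n)$, and $|E(H)| = O(2^k n^2)$ into its running time $\widetilde O(t \cdot |V(H)|^8 + |E(H)|)$ gives $\widetilde O(2^k k \cdot n^8 + 2^k n^2)$. Since $k \leqslant n$ and $d$ is at most the input size, every factor other than $2^k$ here is polynomial in the input size; hence both the reduction step and the matching step run in $\Os(2^k)$ time, and so does the overall algorithm. The reduction is deterministic and the matching routine is a randomized Monte Carlo algorithm, so the composition is again randomized Monte Carlo with one-sided error; if a stronger success guarantee is wanted, $O(n)$ independent repetitions push the error probability below $2^{-n}$ while remaining within the $\Os(2^k)$ budget.

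I do not expect a genuine obstacle in this final assembly --- it is a direct plug-in. The substantive work sits in the two ingredients: in Lemma~\ref{lem:packtomatch}, where the factor $2^k$ arises from enumerating the ways small items can be grouped inside bins when constructing the matching gadget; and in Theorem~\ref{thm:matching}, where the crucial trick is to apply the isolation lemma to pairs of vertices rather than to individual edges, so that the edge count --- which can be as large as $\Theta(2^k n^2)$ --- enters the running time only as an additive term and not inside the exponent, which is exactly what keeps the final bound at $\Os(2^k)$ rather than $\Os(4^k)$.
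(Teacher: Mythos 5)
Your proposal matches the paper's proof exactly: the paper simply states that Lemma~\ref{lem:packtomatch} and Theorem~\ref{thm:matching} together imply the result, and your substitution of $t = O(2^k k)$, $|V(H)| = O(n)$, $|E(H)| = O(2^k n^2)$ into the $\widetilde O(t\cdot n^8 + m)$ bound is precisely the (omitted) plug-in step. Nothing to add.
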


In Appendix~\ref{sec:alt} we give an alternative proof of Theorem~\ref{thm:binpacking}, using a different algorithm, whose running time has a better dependence on the number of items $n$. This algorithm, however, presents a more complicated and tailored approach; in particular, it does not seem to generalize to the Vector Bin Covering problem that we discuss later.

\subsubsection*{Lower bound}

We show that the above result is tight, via a matching conditional lower bound, under the Strong Exponential Time Hypothesis ({\small SETH})~\cite{ImpagliazzoP01}. The hypothesis states that deciding $k${\small-CNF-SAT} with $n$ variables requires time $2^{s_k n}$ for $\lim_{k \to \infty} s_k = 1$. In particular, it implies that deciding {\small CNF-SAT} requires $2^{(1-\varepsilon) n}$ time, for every $\varepsilon > 0$. {\small SETH} is a standard hardness assumption for conditional lower bounds in fine-grained and parameterized complexity~\cite{CyganFKLMPPS15,williams2018some}. We prove the following lower bound for the (non-parameterized) Vector Bin Packing problem.

\begin{restatable}{theorem}{thmlowerbound}
Unless {\small SETH} fails, Vector Bin Packing cannot be solved in $\Os(2^{(1-\varepsilon) n})$ time, for any $\varepsilon > 0$. This holds even restricted to instances with only two bins and dimension $d = O(n)$.
\end{restatable}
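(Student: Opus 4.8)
The plan is to reduce CNF-SAT to Vector Bin Packing with only two bins, so that a hypothetical $\Os(2^{(1-\varepsilon)n})$-time algorithm for the latter would decide an $n$-variable CNF-SAT instance in $\Os(2^{(1-\varepsilon)n})$ time, contradicting SETH. The natural idea is to use the two bins to represent the two truth values: for each variable $x_i$ we create an item (or a small gadget of items) that must go into one of the two bins, with "left bin" encoding $x_i = \text{true}$ and "right bin" encoding $x_i = \text{false}$. A partition of all items into two bins then corresponds bijectively to a truth assignment, and we must engineer the coordinates so that the capacity constraints are satisfied in exactly the assignments that satisfy every clause.

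**First I would** set up the coordinate system. Use one dimension per clause, plus a handful of auxiliary "balancing" dimensions. For a clause $C_j$ with, say, $t_j$ literals, in the clause-$j$ dimension each variable item contributes a value that is "large" if placing that item in its bin falsifies the corresponding literal of $C_j$, and "small" (or zero) otherwise; the capacity $1$ is then tuned so that the sum in dimension $j$ exceeds $1$ precisely when all $t_j$ literals of $C_j$ are falsified — i.e. the clause is unsatisfied by the encoded assignment. Concretely, if a clause has $t$ literals, give each falsifying placement value slightly more than $1/t$ and each non-falsifying placement value slightly less, padding with a fixed constant so the "all falsify" sum crosses $1$ while "at least one satisfies" stays at or below $1$; with rational coordinates this is routine. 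Because each variable appears in one of its two bins, exactly one of its two possible contributions is active per clause dimension, which is what makes the encoding faithful.

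**The auxiliary dimensions** handle a subtlety: we need to force each variable's item to actually occupy a bin and not, e.g., let two "true"-items pile into the same bin in a way that breaks the intended one-item-per-variable-per-bin structure — but since a partition into two bins places every item somewhere, the only real freedom is which of the two bins each item goes to, which is exactly the binary choice we want. The cleaner version is to have a single item $a_i$ per variable with a "true-profile" vector and "false-profile" vector encoded via two dedicated dimensions $2i-1, 2i$ so that $a_i$ fits in bin 1 only in its true-profile sense and in bin 2 only in its false-profile sense — but actually even this is unnecessary if we simply let the item be symmetric and read off the assignment from which bin it lands in; the clause dimensions do all the work. One checks that $d = O(n + \#\text{clauses}) = O(n)$ after the standard sparsification reduction, which lets us assume the number of clauses is $O(n)$; this is where the "$d = O(n)$" claim comes from. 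The total number of items is $n$ (one per variable), matching the "$n$" in the bound.

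**The main obstacle** will be getting the clause-dimension arithmetic exactly right simultaneously across all clauses while keeping the coordinates rational with polynomially bounded bit-length, and making sure no unintended interaction between dimensions allows a non-satisfying assignment to sneak through or a satisfying one to be blocked — in particular, verifying that the "at least one literal satisfied" case really keeps every coordinate sum $\le 1$ with no off-by-one or rounding issue, and that we have not accidentally introduced extra feasible packings (e.g. putting all items in one bin) that don't correspond to assignments. This is handled by a small additive "ballast" in each clause dimension that makes any empty-bin or lopsided configuration immediately overflow, together with a careful choice of the per-literal increments (e.g. values of the form $\tfrac{1}{t_j} + \tfrac{1}{2n}$ versus $\tfrac{1}{t_j} - \tfrac{\text{something}}{2n}$, cleared of denominators by scaling each dimension). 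Once the gadget is verified, the reduction runs in polynomial time and preserves the instance size up to constant factors, so an $\Os(2^{(1-\varepsilon)n})$ algorithm for two-bin Vector Bin Packing yields one for CNF-SAT, refuting SETH.
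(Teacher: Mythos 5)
Your high-level plan (reduce CNF-SAT to two-bin Vector Bin Packing, one item per variable, one dimension per clause, conclude via SETH) matches the paper's, and you correctly note that sparsification gives $d=O(n)$. But there is a genuine gap at the heart of the clause gadget: you write that ``each variable item contributes a value that is `large' if placing that item in its bin falsifies the corresponding literal.'' In Vector Bin Packing an item is a single fixed vector; it contributes the same coordinate $v_i[j]$ to whichever bin it lands in. There is no mechanism for an item's contribution in dimension $j$ to depend on which bin it occupies, and since a partition into two bins is also invariant under swapping the bins, nothing in your instance distinguishes a ``true'' bin from a ``false'' bin. You briefly consider encoding a ``true-profile'' and a ``false-profile'' via dedicated dimensions $2i-1,2i$, then explicitly discard it as ``unnecessary,'' but that symmetry-breaking is exactly what is missing; the vague ``ballast'' in the last paragraph does not supply it.

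The paper resolves this with two ingredients you do not have. First, it adds two extra items $T$ and $F$ with an extra dimension forcing them into different bins; $T[j]$ and $F[j]$ then set \emph{different} residual capacities in clause dimension $j$ for the two bins (in particular $F[j]=0$ so the $F$-bin is never the binding one), which is what lets the per-clause arithmetic be asymmetric even though the variable items themselves are symmetric. Second, it guesses the number $t$ of true variables, constructs $n+1$ instances (one per $t$), and enforces the guess via one more dimension; this fixes the count of variable items in the $T$-bin so that the clause-$j$ inequality can be set to $T[j]=1-(t+n_j-1)/(2n)$ with $v_i[j]\in\{0,\tfrac{1}{2n},\tfrac{2}{2n}\}$, and the algebra collapses to ``at least one literal of clause $j$ is satisfied.'' Without fixing $t$, the contribution of the non-appearing variables to each bin floats and no single threshold works. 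So the proposal as written would not go through; you need the $T$/$F$ anchor items (or some equivalent asymmetry) and the cardinality control, both of which are the actual content of the paper's construction.
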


Since $k \leqslant n$, the corollary for the parameterized version of the problem follows immediately, proving that the algorithm of Theorem~\ref{thm:binpacking} is tight.

\begin{corollary}
Unless {\small SETH} fails, Vector Bin Packing with Few Small Items cannot be solved in $\Os(2^{(1-\varepsilon) k})$ time, for any $\varepsilon > 0$. This holds even restricted to instances with only two bins and dimension $d = O(n)$.
\end{corollary}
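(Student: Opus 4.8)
The plan is to reduce CNF-SAT to Vector Bin Packing with exactly two bins, so that an $\Os(2^{(1-\varepsilon)n})$ algorithm for the latter would yield an $\Os(2^{(1-\varepsilon)n})$ algorithm for the former, contradicting SETH. Given a CNF formula $\varphi$ with variables $x_1,\dots,x_n$ and clauses $C_1,\dots,C_m$, the two bins will correspond to the two truth values. First I would introduce, for each variable $x_i$, a pair of ``twin'' items that are forced into opposite bins; the bin containing the first twin encodes whether $x_i$ is set to true. Forcing the two twins apart is done with a dedicated ``variable dimension'' for each $i$: both twins carry value $\tfrac12 + \delta$ (for a tiny $\delta>0$) in coordinate $i$, so they cannot share a bin. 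All other items have $0$ in the variable dimensions. This already pins down that any feasible packing corresponds to a truth assignment, and conversely.

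Next I would encode the clauses. For each clause $C_j$ create a clause dimension $j$. The intended invariant is that a feasible packing must place, in each bin, a set of items whose $j$-th coordinates sum to at most $1$, and this should be possible in the ``true'' bin iff the assignment satisfies $C_j$ (and symmetrically, or trivially, in the ``false'' bin). The cleanest way is: for a variable $x_i$ occurring positively in $C_j$, give the false-twin of $x_i$ a small ``load'' in coordinate $j$; for a negative occurrence, load the true-twin instead. Then the true bin of a satisfying assignment contains at least one twin that contributes $0$ to coordinate $j$, leaving room, whereas an unsatisfied clause would pile every relevant twin's load into the same bin, overflowing coordinate $j$. The numbers must be chosen so that ``at least one missing load'' is exactly the difference between fitting and not fitting; choosing each per-occurrence load to be slightly more than $1/(\text{number of occurrences})$ works, or one can rescale per dimension. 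I would also add ``filler'' or ``capacity'' items, or use the $\ell=2$ freedom together with a single universal item, to absorb slack and to make sure the total number of items stays $O(n)$ and the dimension stays $d=O(n+m)=O(n)$ after noting $m$ can be taken $O(n)$ by the sparsification lemma (or simply absorbing $m\le 2^{o(n)}$, which still keeps $d$ polynomial; but to get $d=O(n)$ one invokes the Sparsification Lemma of Impagliazzo–Paturi–Zane to assume $m=O(n)$).

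The correctness argument then splits into two directions. If $\varphi$ is satisfiable, take the satisfying assignment, put each variable's true-twin and false-twin into the bins dictated by the assignment, distribute any filler items arbitrarily, and check coordinate by coordinate that no bin exceeds $1$: variable dimensions are fine because twins are separated, and each clause dimension is fine because a satisfied clause leaves at least one contributing load out of the overfull bin. Conversely, a feasible two-bin packing forces the twin structure (hence a truth assignment) from the variable dimensions, and the clause dimensions certify that every clause is satisfied. Care is needed that the adversary cannot ``cheat'' by an unintended placement of filler items; this is handled by giving filler items enough bulk in an auxiliary dimension that their distribution across the two bins is itself essentially forced, or by making them contribute $0$ everywhere except a balancing dimension.

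The main obstacle I expect is the \emph{exactness} of the reduction: we must not merely preserve satisfiability but do so with the item count $n' = n + O(m) + O(1)$ satisfying $n' = n + o(n)$, because the lower bound transfers only if the blow-up in the number of items is sublinear (so that $2^{(1-\varepsilon)n'} = 2^{(1-\varepsilon)n + o(n)}$ still beats the SETH bound). This is why the Sparsification Lemma is essential — it lets us assume $m = O(n)$, but even then $n + O(n)$ is not $n + o(n)$, so we in fact need the two-bin items to be \emph{exactly} one per variable plus a sublinear number of gadget items; the clause constraints must be imposed purely through extra \emph{dimensions} (which are cheap — we only need $d = O(n)$, not a bound on $d$ in the exponent) rather than through extra items. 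Getting all the threshold constants to line up so that $n$ genuine items (two twins \emph{is} already a factor-$2$ blow-up, so one must instead use a single item per variable whose bin choice is the truth value, with a single global ``complement'' item or a clever use of the second bin's residual capacity) suffice is the delicate part; once the gadget uses only $n + O(1)$ items and $O(n)$ dimensions, the SETH consequence is immediate.
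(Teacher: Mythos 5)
Your high-level plan matches the paper's: reduce CNF-SAT to two-bin Vector Bin Packing, one dimension per clause, invoke the Sparsification Lemma to get $d = O(n)$, and note that $k \leqslant n$ transfers the bound to the parameterized problem. You also correctly diagnose the central obstacle — the naive ``twin items'' gadget doubles the item count, killing the SETH transfer — and you correctly conclude that the construction must use essentially one item per variable. But at that point the proposal stops: you gesture at ``a single global complement item or a clever use of the second bin's residual capacity'' without saying how to actually calibrate the clause dimensions so that a single item per variable suffices, and this is precisely the non-trivial part.

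The gap is real because with one item $v_i$ per variable the clause-dimension entry $v_i[j]$ is a fixed number, yet $v_i$ may land in either bin; so the feasibility threshold on the $T$-bin in coordinate $j$ must simultaneously account for how many items end up there. The paper resolves this by \emph{guessing the number $t$ of variables set to true} and building $n+1$ separate instances, one per value of $t$. Once $t$ is fixed, a counting dimension (dimension $m+2$, with $v_i[m+2] = 1/n$, $T[m+2] = (n-t)/n$, $F[m+2] = t/n$) pins down exactly $t$ variable-items in the $T$-bin, and then the clause entries $v_i[j] \in \{0, \tfrac{1}{2n}, \tfrac{2}{2n}\}$ together with $T[j] = 1 - (t + n_j - 1)/(2n)$ (where $n_j$ is the number of negated variables in clause $j$) make the $T$-bin feasible in coordinate $j$ exactly when the clause is satisfied. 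This $t$-dependence of $T[j]$ is what your proposal is missing; without it the threshold cannot be made to distinguish ``at least one literal satisfied'' from ``none,'' because the raw sum in coordinate $j$ shifts with the (unknown) number of true variables. So the proposal is directionally right but incomplete: it identifies the blow-up issue but does not supply the guessing-$t$ device (or any substitute) that makes the one-item-per-variable encoding actually go through.
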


We remark that our lower bound crucially relies on multiple dimensions. The best known hardness result for the (one-dimensional) Bin Packing problem rules out only $2^{o(n)}$~time algorithms~\cite{JansenLL16}, assuming the Exponential Time Hypothesis ({\small ETH})~\cite{Impagliazzo98}. It is a big open problem whether an $O(1.99^n)$ time algorithm for Bin Packing exists. Recently, Nederlof et al.~\cite{NederlofPSW21} gave such an algorithm for any constant number of bins. This is in contrast to Vector Bin Packing, which, as we show, requires $2^{(1-\varepsilon) n}$ time already for two bins.

\subsubsection*{Other applications}
Bannach et al.~\cite{Bannach20} studied two further problems closely related to the Vector Bin Packing problem -- namely, Vector Multiple Knapsack and Vector Bin Covering -- under similar parameterizations.

In the Vector Multiple Knapsack problem, each item comes with a \emph{profit}, and instead of having to pack all the items, we aim to pack a subset of the items into a fixed number of bins while maximizing the overall profit of the packed items. In the few small items regime, the fastest known algorithm so far has a running time of $\Os(k! \cdot 4^k)$, where $k$ is the number of small items~\cite{Bannach20}. Adapting our algorithm to handle the profits and the obstacle that only a subset of items might be packed, we obtain the following theorem.

\begin{restatable}{theorem}{thmKnapsack}
  There is a randomized Monte Carlo algorithm solving Vector Multiple Knapsack with Few Small Items in $\Os(2^k)$ time when item profits are bounded by $\operatorname{poly}(n)$.
\end{restatable}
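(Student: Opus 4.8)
The plan is to reduce Vector Multiple Knapsack with Few Small Items to an Exact Matching instance, mirroring the structure of Lemma~\ref{lem:packtomatch} but accounting for the two new features of the problem: item \emph{profits}, and the fact that we need only pack a \emph{subset} of items rather than all of them. First I would handle the ``subset'' issue by a standard gadget: introduce a single auxiliary ``trash bin'' that has infinite capacity (equivalently, $+\infty$ in every dimension, or simply large enough that every item fits individually and collectively), and observe that packing a subset of $V$ into $\ell$ real bins is equivalent to packing \emph{all} of $V$ into $\ell + 1$ bins where one designated bin is the trash bin. The items sent to the trash bin are precisely the unpacked ones. This converts the optimization-with-rejection version into the same ``partition all items'' shape that drives the reduction behind Lemma~\ref{lem:packtomatch}.

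Next I would re-examine the packing-to-matching reduction and carry the profits through it. In the Bannach-et-al.\ style reduction, large items (being 3-incompatible) form the skeleton that is matched via Edmonds-type matching, while the $k$ small items are dealt with by brute force over the $2^k$ ways of distributing them; each such choice yields a residual matching instance on the large items whose feasibility is tested by an exact-weight perfect matching computation, with the weights encoding which bins received which small-item configuration. To incorporate profits, I would add a second coordinate to the target weight: alongside the existing ``consistency'' weight component from Lemma~\ref{lem:packtomatch}, attach to each edge a profit contribution so that the total weight of a matching simultaneously records the total profit of all packed items. Since profits are bounded by $\operatorname{poly}(n)$ and there are at most $n$ items, total profit is $\operatorname{poly}(n)$; encoding it in a separate block of $O(\log n)$ bits (or, more simply, as a separate coordinate of the exact-weight target, handled by running Theorem~\ref{thm:matching} once for each candidate profit value $P$, of which there are only $\operatorname{poly}(n)$) keeps the edge weights at $O(2^k k) \cdot \operatorname{poly}(n)$ and the number of edges at $O(2^k n^2)$. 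Then a single application of Theorem~\ref{thm:matching}, whose running time is $\widetilde O(t \cdot n^8 + m)$ with $t = O(2^k k) \cdot \operatorname{poly}(n)$ and $m = O(2^k n^2)$, runs in $\Os(2^k)$ time; iterating over $\operatorname{poly}(n)$ profit targets (and taking the largest feasible one) only multiplies by a polynomial factor, which $\Os(\cdot)$ absorbs.

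I expect the main obstacle to be verifying that the trash-bin gadget interacts cleanly with the notion of ``small'' and with 3-incompatibility. The set of large items $V_L$ is 3-incompatible by hypothesis, and this property is unaffected by adding an extra bin; the trash bin does not impose any capacity constraint, so it never forces a dimension to be exceeded, hence it introduces no new feasibility obstructions and correctness of ``pack a subset into $\ell$ bins $\iff$ partition all into $\ell+1$ bins with one unconstrained'' is immediate. The subtler point is that the reduction of Lemma~\ref{lem:packtomatch} must be shown to still apply with $\ell+1$ bins one of which is special: since that reduction is oblivious to the specific bin capacities (it only uses them to decide which item-sets are feasible) and $\ell$ enters only polynomially, the parameter $k$ (number of small items) is unchanged and the stated bounds carry over verbatim. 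Once these routine checks are in place, the theorem follows by combining the modified reduction with Theorem~\ref{thm:matching} exactly as Theorem~\ref{thm:binpacking} does, with the only genuinely new ingredient being the profit-tracking coordinate and the $\operatorname{poly}(n)$ outer loop over profit values, which is where the $\operatorname{poly}(n)$ profit bound in the statement is used.
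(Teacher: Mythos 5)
The profit-tracking part of your plan is sound: iterating over the $\operatorname{poly}(n)$ candidate profit values $P$ and encoding profit into a separate high-order block of bits of the exact-weight target (with enough padding to rule out carry-over from the partition-encoding part) would work and only costs a polynomial factor, which $\Os(\cdot)$ absorbs. The paper takes a slightly different route -- it keeps a single target weight and, instead, attaches a \emph{cost} $p(v_1)+p(v_2)+\sum_{v \in V'_S} p(v)$ to each edge, then modifies the matching algorithm of Theorem~\ref{thm:matching} to look for a perfect matching of exact weight $t$ that \emph{maximizes} total cost; concretely, the input costs are placed in the most significant bits and the Isolation-Lemma costs in the least significant bits of the combined cost, and one reads off the answer from the highest nonzero $\lambda$-ary digit of $\coef_t(\pf(A))$. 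Either approach is fine; yours is arguably more modular, the paper's avoids the outer loop.

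The genuine gap is the trash-bin gadget. The reduction of Lemma~\ref{lem:packtomatch} is \emph{not} oblivious to bin capacities: it rests entirely on the fact that $V_L$ is 3-incompatible with respect to a \emph{unit-sized} bin, which forces every bin to contain at most two large items, and it then pads with dummy items so that every bin has \emph{exactly} two, giving the one-to-one correspondence between bins and matching edges. A single trash bin of unbounded capacity breaks this: arbitrarily many large items may end up there, so it cannot be represented by a single matching edge, and the perfect-matching formulation collapses. (It is also not a valid Vector Bin Packing instance in the paper's sense, since all bins are assumed unit-sized, but even generalizing to heterogeneous bin sizes does not repair the matching correspondence.) The paper instead introduces $g = n - 2\ell$ \emph{blocker} vertices, each of which is matched to exactly one unpacked large item (and may absorb some small items via the edge's weight), with profit cost $0$. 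This preserves the invariant that every matching edge corresponds to exactly two vertices (a bin with two large items, or one large item plus one blocker), so the reduction and Lemma~\ref{lem:edgeWeights} carry over. Your proof needs the blocker mechanism -- or some other device that decomposes the discard set into degree-one gadgets -- in place of the single trash bin.
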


In the Vector Bin Covering problem, we aim to \emph{cover} bins. Intuitively speaking, instead of packing the items into as few bins as possible, we want to partition them into as many bins as possible while satisfying a \emph{covering constraint} for each bin. This new desired property of a solution leads to a slightly different definition of the set of small items: instead of any three large items not fitting together into a bin, now they cover a bin. So far, the fastest algorithm solving this problem parameterized by the number $k$ of small items\footnote{Even though Bannach et al.~do not explicitly adapt their definition of a small item to this problem, they indeed work with the same definition as we do. In the full version on arXiv~\cite[page 11]{Bannach20Long} they write: ``The large vectors have the property that every subset of three vectors cover a container.''} runs in time $\Os(k! \cdot 4^k)$~\cite{Bannach20}. We give the following improvement.

\begin{restatable}{theorem}{thmCover}
  There is a randomized Monte Carlo algorithm solving Vector Bin Covering with Few Small Items in $\Os(2^k)$ time.
\end{restatable}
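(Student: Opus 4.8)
The plan is to follow the same two-step strategy as in the proof of Theorem~\ref{thm:binpacking}: reduce Vector Bin Covering with Few Small Items to an Exact Matching instance of the shape promised by Lemma~\ref{lem:packtomatch}, and then invoke Theorem~\ref{thm:matching}. Since Bin Covering is an optimization problem, I would first make it a decision problem by fixing a target number $\ell$ of covered bins; iterating over all $\ell \in \{1,\dots,n+k\}$ (or binary-searching on $\ell$) costs only a polynomial-in-$n$ overhead, which is absorbed by the $\Os(\cdot)$ notation. The reduction itself parallels the one behind Lemma~\ref{lem:packtomatch}: each candidate covered bin is realized as an edge (or a short gadget) between the at most two large items it contains, carrying a weight that records, as a bit mask, the subset of small items placed in the bin, and additionally contributes $1$ to a separate ``bin counter'' field of the weight; dedicated zero-count ``discard'' edges account for small items that stay unused, and the exact target weight then simultaneously forces every small item to be used exactly once and the number of selected covered bins to be exactly $\ell$. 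Replacing the ``a set of items fits into a bin'' test by the ``a set of items covers a bin'' test when deciding which edges to create is the only change needed for these parts.

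The one genuinely new phenomenon, compared with Bin Packing, concerns bins consisting of large items only. Because any three large items cover a bin, a standard exchange argument shows there is an optimal solution in which every covered bin contains at most two large items, \emph{except} for some bins made of exactly three large items and no small item; moreover, any three ``leftover'' large items always form a valid covered bin. Hence, once the bins containing a small item, the single self-covering large items, and the covering pairs of large items have been chosen, the remaining ``pool'' of $r$ large items can be regrouped arbitrarily into triples, contributing exactly $\lfloor r/3 \rfloor$ further covered bins. To capture this inside the matching framework I would additionally guess $r \in \{0,\dots,n\}$, add $r$ fresh ``pool'' vertices, and join every large item to every pool vertex by a weight-zero edge. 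In a perfect matching exactly $r$ large items are then matched into the pool, all the remaining large items are matched by bin-encoding edges, and the triple-bin contribution is the \emph{fixed} quantity $\lfloor r/3 \rfloor$ --- so it need not be encoded in the weights at all, and the constraint on the ``bin counter'' field only has to equal $\ell - \lfloor r/3 \rfloor$. This keeps the graph at $O(n)$ vertices and $O(2^k n^2)$ edges, and the weights and target weight bounded by $2^k$ times a polynomial in $n$ and $k$, as in Lemma~\ref{lem:packtomatch}.

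Running Theorem~\ref{thm:matching} on the resulting instance takes $\widetilde O(t\cdot n^8 + m) = \Os(2^k)$ time, and the outer loop over the $O(n^2)$ guesses $(\ell, r)$ multiplies this only by a polynomial factor, so the overall running time is $\Os(2^k)$. I expect the main obstacle to be the handling of large-only bins: matchings are inherently about pairs, whereas a covered large-only bin may require three items, so one must argue carefully that the ``pool plus guessed $r$'' device faithfully represents every optimal covering, and that the small-item bit mask and the bin counter can be packed into a single non-negative integer weight without carries while staying within the magnitude bound of Lemma~\ref{lem:packtomatch}. Everything else --- the discard edges, the self-covering-singleton gadget, and the covering tests --- is routine adaptation of the construction in Lemma~\ref{lem:packtomatch}.
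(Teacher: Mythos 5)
Your high-level plan matches the paper's: reduce to an exact-weight perfect matching using the weight scheme of Lemma~\ref{lem:edgeWeights}, and absorb leftover large items into triples via auxiliary ``pool''/``blocker'' vertices whose edges carry weight zero. That is the right idea and it does go through, but several details in your write-up are either confused or point at a genuine gap. First, a covered bin in Vector Bin Covering may contain zero or one large item, so you still need the dummy items of Lemma~\ref{lem:packtomatch}; you never say how many to add. The paper resolves this by guessing the numbers $\ell_0,\ell_1,\ell_2$ of bins with $0$, $1$, $2$ large items (subject to $\ell_0+\ell_1+\ell_2+\lfloor(n-k-\ell_1-2\ell_2)/3\rfloor\geqslant\ell$), adding $2\ell_0+\ell_1$ dummies and $n-k-\ell_1-2\ell_2$ blocker vertices. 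In your $(\ell,r)$ parameterization the dummy count is uniquely determined as $2(\ell-\lfloor r/3\rfloor)-(n-k)+r$, which is fine but must be stated; once it is, the number of bin-encoding edges in any perfect matching is already forced by the vertex count, so your ``bin counter'' field in the edge weight is redundant and you can use exactly the weight $f(V'_S)$ of Lemma~\ref{lem:edgeWeights} with target $k\cdot2^k+(2^k-1)$, as in the packing reduction. Second, the ``discard edges for small items that stay unused'' are a red herring: Vector Bin Covering, as defined here, partitions \emph{all} items into bins, and your own target weight already forces each small item to lie on exactly one matching edge, so there is nothing to discard (you may be thinking of the blocker edges used in the Vector Multiple Knapsack adaptation). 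Finally, the exchange argument you flag -- that any optimal cover can be normalized so that every bin has at most two large items plus small items, or is a pure large-item triple, with at most two leftover large items dumped into an already-covered bin -- is indeed the one nontrivial correctness step and should be made explicit; it relies on $\mathcal P$ being monotone and on $V_L$ being $3$-covering.
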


Further, our results directly imply an improved running time for the Perfect Matching with Hitting Constraints problem. This problem asks whether we can find a~perfect matching in a graph using at least one edge from each of given subsets of edges.
It was studied by Marx and Pilipczuk~\cite{Marx14} as a tool for solving a subgraph isomorphism problem in forests.
They gave an algorithm (for the matching problem) running in time $2^{O(k)}n^{O(1)}$, where $k$ is the number of edge subsets.
Their algorithm shares certain similarities with our Vector Bin Packing algorithm.
They use, however, a~less efficient encoding of subsets into edge weights (using $2k$ bits, compared to $k \log k$ bits we achieve in Lemma~\ref{lem:edgeWeights}), and they only coarsely analyze the polynomial dependence on the weights when solving Exact Matching. 
Avoiding these two inefficiencies, we prove the following theorem.

\begin{restatable}{theorem}{thmPMwHC}
  There is a randomized Monte Carlo algorithm solving Perfect Matching with Hitting Constraints in $\Os(2^k)$ time.
\end{restatable}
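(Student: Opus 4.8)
The plan is to reduce Perfect Matching with Hitting Constraints to Exact Matching and then invoke Theorem~\ref{thm:matching}, in exact analogy with the route taken for Vector Bin Packing (Lemma~\ref{lem:packtomatch} composed with Theorem~\ref{thm:matching}). All of the difficulty lies in encoding the $k$ hitting constraints into non-negative integer edge weights of magnitude $\Os(2^k)$, using a (multi-)graph with $\Os(2^k)$ edges, so that the statement ``every subset $E_i$ is hit'' becomes ``there is a perfect matching of one prescribed total weight''; this is precisely what Lemma~\ref{lem:edgeWeights} is meant to provide.

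Concretely, given the host graph $G=(V,E)$ and the edge subsets $E_1,\dots,E_k\subseteq E$, for each edge $e$ I would compute $c(e)=\{\,i\in[k] : e\in E_i\,\}$, and build a multigraph $G'$ on the same vertex set by replacing every edge $e=uv$ with one parallel copy $e_S$ for each subset $S\subseteq c(e)$; the intended meaning of selecting $e_S$ in a matching is ``$e$ is used, and $e$ is credited with fulfilling exactly the constraints in $S$''. The weight I would assign is the two-component integer $w(e_S)=N\cdot|S|+\sum_{i\in S}2^{i-1}$ with $N:=k\cdot 2^k$: the high-order part records how many constraints $e$ fulfils, and the low-order part is the characteristic vector of $S$ read in binary. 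The easy direction is that if a perfect matching $M$ of $G$ hits every $E_i$, one may pick for each $i$ some $e\in M\cap E_i$ and let $S_e$ be the set of constraints delegated to $e$; then the $S_e$'s partition $[k]$, each $S_e\subseteq c(e)$, and the corresponding matching $\{e_{S_e}:e\in M\}$ of $G'$ has total weight exactly $t:=Nk+(2^k-1)$.

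The heart of the proof is the converse: \emph{every} perfect matching of $G'$ of total weight exactly $t$ encodes a partition — not merely a cover — of $[k]$. Writing such a matching as $\{e_{S_e}\}$, its weight is $N\big(\sum_e|S_e|\big)+\sum_e\sum_{i\in S_e}2^{i-1}$, and since both summands lie strictly between $0$ and $N$, weight $t$ forces $\sum_e|S_e|=k$ and $\sum_e\sum_{i\in S_e}2^{i-1}=2^k-1$; letting $c_i$ be the number of chosen sets containing $i$, this says $\sum_{i=1}^k c_i=k$ and $\sum_{i=1}^k c_i 2^{i-1}=2^k-1$ with all $c_i\geq 0$ integers. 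I expect the main obstacle to be the (elementary but genuinely needed) number-theoretic fact that this system has only the solution $c_1=\dots=c_k=1$; granting it, the $S_e$'s are pairwise disjoint and cover $[k]$, so each $E_i$ is hit by the underlying matching. It is exactly this fact that lets the encoding stay at size $\Os(2^k)$: the bare binary indicator, without the $N\cdot|S|$ bookkeeping term, admits false positives from over-hitting — for instance hitting one constraint three times and another twice contributes the same weight $3+4=7$ as hitting three distinct constraints once each — and the digit-sum equation $\sum c_i=k$ is what rules this out.

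Finally I would collect the parameter bounds and call the matching algorithm. The multigraph $G'$ has $|V|$ vertices, at most $2^k\cdot|E|=\Os(2^k)$ edges, non-negative integer weights at most $Nk+2^k=O(k^2 2^k)=\Os(2^k)$, and target weight $t=\Os(2^k)$, and it is built in $\Os(2^k)$ time. By Theorem~\ref{thm:matching}, deciding whether $G'$ has a perfect matching of weight exactly $t$ takes $\widetilde O\big(t\cdot|V|^8+m\big)=\Os(2^k)$ time with one-sided Monte Carlo error, which — combined with the equivalence established above — gives the claimed $\Os(2^k)$ randomized algorithm. The only remaining loose ends are trivial: if $|V|$ is odd the answer is simply ``no'', and the slightly loose $O(k^2 2^k)$ weight bound above can be sharpened to the $\Os(2^k)$ (few extra bits) bound of Lemma~\ref{lem:edgeWeights}, which only affects polynomial factors.
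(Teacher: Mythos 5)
Your high-level strategy matches the paper's exactly: reduce to Exact Matching in a multigraph using the $|X|\cdot 2^k + \sum_{x\in X}2^{x-1}$ weight scheme of Lemma~\ref{lem:edgeWeights}, then invoke Theorem~\ref{thm:matching}. However, your specific edge-copy construction has a genuine bug. The problem requires the $k$ hitting edges $e_1,\dots,e_k$ to be \emph{distinct}, so a single matching edge is not allowed to simultaneously account for two different constraints. You create, for each edge $e$, a copy $e_S$ for \emph{every} subset $S \subseteq c(e)$, which permits a single matching edge to be credited with hitting several constraints at once. From a YES answer you only recover a partition $\{S_e\}$ of $[k]$ with $S_e \subseteq c(e)$, and if some $|S_e|\geqslant 2$ that does not yield $k$ distinct hitting edges. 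Concretely, if $E_i = E_j = \{e\}$ for some $i\neq j$, then the true answer is NO (any matching has only one edge usable for both constraints), yet your multigraph can achieve total weight $t$ by selecting $e_{\{i,j\}}$ and partitioning the remaining constraints among the other edges, so your algorithm reports YES.

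The paper sidesteps this by creating copies only for singletons: one copy of $e$ with weight $1\cdot 2^k + 2^{i-1}$ for each $i$ with $e\in E_i$, plus the original $e$ with weight $0$. Then, by Lemma~\ref{lem:edgeWeights}, achieving total weight $k\cdot 2^k+(2^k-1)$ forces a partition of $[k]$ into singletons, i.e., $k$ distinct matching edges, one per constraint — exactly the distinctness the problem demands. (A side benefit is that the paper's graph has only $O(k\cdot |E|)$ edges rather than up to $2^k\cdot |E|$, though this does not change the $\Os(2^k)$ bound.) The rest of your argument — including the intermediate algebra about splitting the total weight into the two components and the appeal to the uniqueness fact underlying Lemma~\ref{lem:edgeWeights} — is fine once this construction is corrected.
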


\section{From Vector Bin Packing to Exact Matching}\label{sec:PackingToMatching}

\lempacktomatch*

\begin{proof}
We interpret the problem of finding a packing as the problem of finding a~perfect matching with a certain total weight in an edge-weighted (multi-)graph. Intuitively, each large item is represented by a vertex, and an edge connects two large items if they fit together into a bin. The edge weight indicates a set of small items which can be packed together with the endpoints~(i.e., the corresponding large items). The goal is to match (pack) all large items while achieving the total weight that corresponds to all small items being assigned to some pairing of large items.

Formally, we first add $2\ell - \lvert V_L\rvert$ dummy items $\langle 0, \dots, 0 \rangle \in \Qnn^d$ to the set $V_L$ so that each bin will contain exactly two large items. 
A dummy item can be paired with another dummy item (no original large item is in that bin), or with an original large item (only one original large item is in that bin). 
For each large item $v \in V_L$ (including the dummy items), create a vertex $u_v$. 
For each pair of large items $v_1, v_2 \in V_L$, $v_1 \neq v_2$, and for each subset $V'_S \subseteq V_S$ of small items, introduce an edge between $u_{v_1}$ and $u_{v_2}$ if $v_1[i] + v_2[i] + \sum_{v \in V'_S}v[i] \leqslant 1$ for all $i \in [d]$, i.e., the small items fit together with the two large ones into a bin.\footnote{Note that it is important to add an edge for each fitting subset, and not, e.g., only for inclusion-wise maximal fitting subsets. That is because we design the edge weights so that an exact matching corresponds to a partition (and not to a cover) of the set $V_s$.} The weight of the edge will depend on $V'_S$ (but not on $v_1$ and $v_2$).

We need to design the edge weights such that each collection of edges of a certain total weight corresponds to a collection of subsets of small items that form a partition of the set of all small items $V_s$, and vice versa. A naive, but incorrect, solution would be to label the small items with integers $1, 2, \ldots, k$, and assign to a subset $X \subseteq [k]$ the integer whose binary representation corresponds to the indicator vector of $X$, i.e., $\sum_{x \in X} 2^{x-1}$. It is true that, with such weights, any collection of edges whose associated subsets form a partition of $V_s$ has the total weight $1\ldots1_2 = 2^k - 1$. However, the reverse statement is not true: it is possible to obtain the total weight $2^k-1$ by, e.g., taking $2^k-1$ edges that each allow small item $1$ but no other small items.

As we will show in Lemma~\ref{lem:edgeWeights}, in order to prevent such false positives, it suffices to concatenate the indicator vectors with $(\log k)$-bit counters denoting the number of elements in a set.\footnote{Marx and Pilipczuk~\cite{Marx14} solve a similar issue by concatenating the indicator vector with its reverse, i.e., they assign to $X$ weight $\sum_{x \in X} (2^{2k-x} + 2^{x-1})$. 
Their approach results in weights of the order of $4^k$, which is prohibitively large for achieving $\Os(2^k)$ running time.} More formally, we assign to a subset $X \subseteq [k]$ the weight $|X| \cdot 2^k + \sum_{x \in X} 2^{x-1}$, i.e., the $(k+\log k)$-bit integer whose $k$ least significant bits correspond to the indicator vector of $X$ and the $\log k$ most significant bits form the integer equal to the cardinality of $X$. 
The target total weight $k \cdot 2^k + (2^k - 1)$ can only be achieved by summing weights given to subsets forming a partition of $V_s$, i.e., by assigning each small item to (exactly) one matching edge.
\end{proof}

\begin{lemma}\label{lem:edgeWeights}
Fix the universe size $k \in \N$, and let $f: 2^{[k]} \to \N$ be given by
\[f(X) = |X| \cdot 2^k + \sum_{x \in X} 2^{x - 1}.\]
Then, a family $X_1, \ldots, X_n \subseteq [k]$ is a partition\footnote{That is, $X_1 \cup \cdots \cup X_n = [k]$, and $X_i \cap X_j = \emptyset$ for every $i \neq j$.} of $[k]$ if and only if
\[f(X_1) + \cdots + f(X_n) = k \cdot 2^k + (2^k - 1).\]
\end{lemma}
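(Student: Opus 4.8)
The forward direction is immediate: if $X_1, \ldots, X_n$ partition $[k]$, then $\sum_i |X_i| = k$ and the indicator vectors sum coordinate-wise to the all-ones vector, so $\sum_i \sum_{x \in X_i} 2^{x-1} = 2^k - 1$; adding the two contributions gives exactly $k \cdot 2^k + (2^k-1)$.

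For the reverse direction, the plan is to split $f(X) = |X| \cdot 2^k + g(X)$ where $g(X) = \sum_{x \in X} 2^{x-1} \in \{0, 1, \ldots, 2^k - 1\}$, and argue about the "high" part (the cardinality counter) and the "low" part (the indicator) separately. First I would observe that since each $g(X_i) \leqslant 2^k - 1 < 2^k$, the sum $\sum_i g(X_i)$ contributes a carry of at most $n-1$ into the high-order region, but more usefully: writing $S = \sum_i f(X_i)$, we have $S = \left(\sum_i |X_i|\right) \cdot 2^k + \sum_i g(X_i)$. Setting this equal to $k \cdot 2^k + (2^k - 1)$ gives one equation in integers. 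I would then use the key size bound $0 \leqslant \sum_i g(X_i)$ together with the fact that $\sum_i |X_i| \geqslant |X_1 \cup \cdots \cup X_n|$ with equality iff the sets are pairwise disjoint, and that $\sum_i g(X_i) \geqslant 2^k - 1$ with equality iff the sets cover $[k]$ (since $g$ is superadditive on unions and $g([k]) = 2^k-1$). The crux is to combine these: from $S = k\cdot 2^k + 2^k - 1$, if the cardinality sum $\sum_i |X_i|$ were $\geqslant k+1$, then $S \geqslant (k+1)2^k + 0 > k \cdot 2^k + 2^k - 1$, a contradiction; hence $\sum_i |X_i| \leqslant k$. Combined with $\sum_i |X_i| \geqslant |\bigcup_i X_i|$, and noting $\bigcup_i X_i \subseteq [k]$ forces $|\bigcup_i X_i| \leqslant k$, we still need to rule out $\sum_i |X_i| < k$: but if $\sum_i |X_i| \leqslant k-1$ then $S \leqslant (k-1)2^k + \sum_i g(X_i)$, and since each $g(X_i) \leqslant 2^k - 1$ while also $g(X_i) \leqslant 2^{|X_i|} - 1 \cdot(\text{something})$ — more simply, $\sum_i g(X_i) < \sum_i |X_i| \cdot 2^k / 1$ is too weak, so instead I would bound $\sum_i g(X_i)$ by noting the total number of bits set is $\sum_i |X_i| \leqslant k-1$, hence $\sum_i g(X_i) \leqslant \sum_i(2^{k} - 1)$ is useless, but $\sum_i g(X_i) < 2^{k}$ only if... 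Let me instead use: $g(X_i) \leqslant 2^{k-1} + 2^{k-2} + \cdots$ restricted to $|X_i|$ terms, so the sum over all $i$ of $g(X_i)$, having a total of at most $k-1$ set bits (with multiplicity), is at most the value of the $k-1$ largest available powers, i.e., $< 2^k$; then $S \leqslant (k-1)2^k + (2^k - 1) < k \cdot 2^k + 2^k - 1$, contradiction. Therefore $\sum_i |X_i| = k$ exactly, which forces $|\bigcup_i X_i| = k$, hence $\bigcup_i X_i = [k]$ and the sets are pairwise disjoint, i.e., they form a partition.

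The main obstacle I anticipate is cleanly ruling out the case $\sum_i |X_i| < k$ without carries from the low part interfering: one must be careful that $\sum_i g(X_i)$ can exceed $2^k - 1$ in principle (it can be as large as $n(2^k-1)$), so the clean separation into "high bits = $k$, low bits = $2^k-1$" is not literally a digit-wise statement but an inequality argument. The fix, as sketched, is that once the cardinality sum is pinned to be at most $k$ by the leading-term comparison, the total Hamming weight of all indicator vectors is at most $k$, so $\sum_i g(X_i) \leqslant \sum_{j=0}^{k-1} 2^j = 2^k - 1$, with equality forcing exactly $k$ set bits all distinct and covering positions $0, \ldots, k-1$ — giving disjointness and coverage simultaneously. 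I would present this last step as a short lemma-within-the-proof: "a multiset of at most $k$ powers of two from $\{2^0, \ldots, 2^{k-1}\}$ summing to $2^k - 1$ must be exactly $\{2^0, \ldots, 2^{k-1}\}$," proved by induction on $k$ or by a direct greedy/binary-representation argument.
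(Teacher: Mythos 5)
Your forward direction and the first half of the reverse direction are fine: from $S = (\sum_i|X_i|)\cdot 2^k + \sum_i g(X_i) = k\cdot 2^k + (2^k-1)$ together with $\sum_i g(X_i)\geqslant 0$ you correctly get $\sum_i|X_i|\leqslant k$. The gap is the very next step. You claim that ``total Hamming weight of all indicator vectors is at most $k$, so $\sum_i g(X_i)\leqslant 2^k-1$,'' and earlier that a multiset of at most $k-1$ powers of two from $\{2^0,\ldots,2^{k-1}\}$ sums to less than $2^k$. Both are false for \emph{multisets}: the same element may appear in several $X_i$, so the same power of two may be repeated. For instance $k-1$ copies of $2^{k-1}$ give Hamming weight $k-1$ but sum $(k-1)\cdot 2^{k-1}$, which exceeds $2^k$ once $k\geqslant 5$. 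Your ``lemma-within-the-proof'' at the end (a multiset of at most $k$ powers summing to $2^k-1$ must be $\{2^0,\ldots,2^{k-1}\}$) is itself correct, but you cannot invoke it until you have \emph{established} $\sum_i g(X_i)=2^k-1$, and that is exactly where the false bound sits.

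The fix keeps your decomposition but replaces the false numerical bound by a popcount bound, which is also the key ingredient in the paper's proof (the paper phrases it via prefix sums $p_i = c_1+\cdots+c_i$, the observation $p_i\geqslant i$, and an Abel-summation telescoping, with the popcount fact buried in a footnote). Concretely: let $c=\sum_i|X_i|\leqslant k$; then $\sum_i g(X_i) = (k-c+1)2^k - 1 = (k-c)\cdot 2^k + (2^k-1)$, whose number of binary ones is $k + \operatorname{popcount}(k-c)\geqslant k$. On the other hand $\sum_i g(X_i)$ is a sum of $c$ powers of two, so its popcount is at most $c\leqslant k$. Hence $c=k$, $\operatorname{popcount}(k-c)=0$, and the popcount of the sum equals the number of summands, which forces all $k$ powers to be distinct, i.e., each element of $[k]$ lies in exactly one $X_i$. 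This is a clean and arguably shorter route than the paper's prefix-sum/telescoping computation, but your write-up as it stands has a genuine error in the central inequality.
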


\begin{proof}
The ``partition\,$\Rightarrow$\,sum'' direction follows from a simple calculation. Let us prove the ``sum\,$\Rightarrow$\,partition'' direction.
For $i \in [k]$, let $c_i$ denote the number of sets containing element~$i$. We want to show that $c_i = 1$, for every $i$. We have
\[f(X_1) + \cdots + f(X_n) = \bigg(\sum_{i=1}^{k} c_i\bigg) \cdot 2^k + \sum_{i=1}^{k} c_i2^{i-1}.\]
Note that the $k$ least significant bits of the sum $f(X_1) + \cdots + f(X_n)$ are lower bounding the term $\sum_{i=1}^{k} c_i2^{i-1}$, and the remaining bits are upper bounding the term $\sum_{i=1}^{k} c_i$, that is,
\[
  \sum_{i=1}^{k} c_i2^{i-1} \geqslant 2^k - 1 = {\overbrace{1\ldots1}^{k \text{ ones}}}_2,
  \quad \text{and} \quad
  \sum_{i=1}^{k} c_i \leqslant k.
\]
For $i = 0, 1, \dots, k$, let $p_i = c_1 + \cdots + c_i$, with $p_0 = 0$.
Observe that $p_i \geqslant i$, for every $i$, as otherwise there are not enough bits to set the one in every position among the $i$ least significant bits of the sum $f(X_1) + \cdots + f(X_n)$.\footnote{It follows from the fact that the number of one-bits in the sum is less or equal to the total number of one-bits in the summands, and that this holds even if we look only at the $i$ least significant bits.} Moreover, $p_k = \sum_{i=1}^{k} c_i \leqslant k$, and thus $p_k = k$. Last but not least, by definition, $c_i = p_i - p_{i-1}$. We have
\begin{spreadlines}{.8em}
\begin{align*}
2^k - 1 \: & \leqslant \:
\sum_{i=1}^{k} 2^{i-1} c_i \:=\:
\sum_{i=1}^{k} 2^{i-1} (p_i - p_{i-1}) \:=\:
\sum_{i=1}^{k} 2^{i-1} p_i - \sum_{i=1}^{k} 2^{i-1} p_{i-1} \\ & = \:
\sum_{i=1}^{k} 2^{i-1} p_i - \sum_{i=0}^{k-1} 2^i p_i \:=\:
2^k p_k + \sum_{i=1}^{k} (2^{i-1}-2^i) p_i - 2^0 p_0 \:=\:
2^k p_k - \sum_{i=1}^{k} 2^{i-1} p_i \\ & = \:
2^k \cdot k - \sum_{i=1}^{k} 2^{i-1} p_i \\ & \leqslant \:
2^k \cdot k - \sum_{i=1}^{k} 2^{i-1} i \:=\:
2^k \cdot k - \big( (k-1) \cdot 2^k + 1 \big) \:=\: 2^k - 1.
\end{align*}
\end{spreadlines}
Hence, all the inequalities must be tight. In particular, $p_i = i$ for every $i$, and thus $c_i = 1$, i.e., each element of the universe is contained in exactly one set of the family.
\end{proof}

\section{Fast Exact Weight Matching in Multigraphs}

In this section we give our variant of the Mulmuley-Vazirani-Vazirani algorithm, with only a linear dependence on the edge weights and a linear additive dependence on the number of edges, proving Theorem~\ref{thm:matching}.

\subsection{The Pfaffian}
At the heart of the matching algorithm lies a computation of the Pfaffian of a skew-symmetric matrix of certain polynomials.
In order to introduce the notion of a Pfaffian properly, let us fix some definitions and notation first.

For an $n \times n$ matrix $A$, we denote by $A[i, j]$ the value in the $i$-th row and $j$-th column.
We say that $A$ is skew-symmetric if and only if $A[i,j] = -A[j, i]$ for every $i, j \in [n]$.
Let $\mathcal{M}$ be a perfect matching in the complete graph $K_{n}$.
We can look at $\mathcal{M}$ as a sequence of edges in some arbitrary order, i.e.,
\[\mathcal{M} = (i_1, j_1), (i_2, j_2), \ldots, (i_{n/2}, j_{n/2}),\]
where, by convention, $i_k \leqslant j_k$ for any $k$.
Now, we define the sign of $\mathcal{M}$ as follows:
\[\sgn \mathcal{M} = \sgn \left(
  \begin{smallmatrix}
    1 & 2 & 3 & 4 & \cdots & n-1 & n \\
    i_1 & j_1 & i_2 & j_2 & \cdots &  i_{n/2}  & j_{n/2}
  \end{smallmatrix}
\right),\]
where the right-hand side is the sign of a permutation. One can easily show that this definition does not depend on the chosen order of the edges.

Now, we are ready to give the definition of a Pfaffian.
\begin{definition}[Pfaffian]
  Let $A$ be an $n \times n$ skew-symmetric matrix.
  The Pfaffian of $A$ is denoted by $\pf(A)$ and is defined as follows
  \[\pf(A) = \sum \biggl\{ \sgn \mathcal{M} \cdot \prod_{\mathclap{(i_k, j_k) \in \mathcal{M}}} A[i_k, j_k] \biggm\vert \mathcal{M} \text{ perfect matching in } K_n \biggr\}. \]
\end{definition}
We note that since $A$ is skew-symmetric, our convention that $i_k \leqslant j_k$ does not affect the definition of the Pfaffian at all -- if we were to switch $i_k$ and $j_k$, the sign of the matching changes, but so does the sign of the product of the weights.

Several equivalent definitions of a Pfaffian exist in the literature. However, we have chosen this one, as it immediately illustrates the connection between the Pfaffian and perfect matchings.

The Pfaffian of a matrix over an arbitrary field can be computed by, e.g., a variant of the Gaussian elimination.
However, since we are dealing with polynomial matrices, we would like to avoid divisions.
Fortunately, several division-free polynomial time algorithms for computing Pfaffian exist \cite{Mahajan99, Rote2001-ra, Urbanska07}.

Incidentally, Mahajan, Subramanya, and Vinay \cite{Mahajan99} give a dynamic programming algorithm computing the Pfaffian of a matrix with entries from an arbitrary ring that makes $O(n^4)$ additions and multiplications (see also survey \cite{Rote2001-ra} for an alternative exposition)\footnote{Urbańska's algorithm \cite{Urbanska07} runs even faster, in $O(n^{3.005})$ time.
But since we care more about getting linear dependence on the target weight in our matching algorithm, rather than optimizing the polynomial dependence on $n$, we have chosen to use a slightly slower, yet simpler algorithm for the sake of clarity.}.
By analysing the structure of their algorithm, we get the following result for matrices with polynomial entries.

\begin{theorem}[cf.~\cite{Mahajan99}, Section 4] \label{alg:pfaffian}
  Given an $n \times n$ matrix $A$ of univariate polynomials of degree at most~$d$ and integer coefficients bounded by $M$,
  the Pfaffian $\pf(A)$ can be computed in $\widetilde O(n^6 d \log M)$ time.
\end{theorem}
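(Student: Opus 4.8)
The plan is to instantiate the ring-agnostic dynamic programming algorithm of Mahajan, Subramanya, and Vinay~\cite{Mahajan99} over the coefficient ring $R = \mathbb{Z}[x]$ of integer univariate polynomials, and then carefully bound the bit-size of the intermediate objects and the cost of each single ring operation. The correctness would then be inherited verbatim from~\cite{Mahajan99}, since $\mathbb{Z}[x]$ is a commutative ring with identity, and only the running-time accounting is new.

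First I would recall the relevant structure of the algorithm in~\cite{Mahajan99}: it computes $\pf(A)$ by a dynamic program that sums, with signs in $\{+1,-1\}$, over a family of combinatorial objects (generalized clow sequences) of total length $O(n)$, where the contribution of each object is a product of $O(n)$ entries of $A$; the dynamic program performs $O(n^4)$ ring operations (additions, subtractions, multiplications) and uses no divisions. I would take this as a black box.

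Next I would bound the two quantities that govern the cost of one ring operation when $R = \mathbb{Z}[x]$. Every value that appears during the execution is a signed sum of products of $O(n)$ entries of $A$; since each entry has degree at most $d$, every such polynomial has degree $O(nd)$. For the coefficients, a single product of $t = O(n)$ entries has coefficients bounded by $M^{t}(d+1)^{t-1} = 2^{O(n(\log M + \log d))}$, and the number of objects summed by the dynamic program is at most $n^{O(n)}$, so every intermediate polynomial has coefficients of bit-length $O(n\log n + n\log M + n\log d) = \widetilde O(n\log M)$ (here and below $\widetilde O$ hides polylogarithmic factors in $n$, $d$, $M$, and we may assume $M \geqslant 2$). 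Verifying that the partial sums maintained by the dynamic program cannot overcount by more than $n^{O(n)}$ terms — that is, reading off this crude but sufficient bound from the combinatorial description in~\cite[Section~4]{Mahajan99} — is the only real obstacle; the rest is bookkeeping.

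Finally I would account for the arithmetic. Adding (or subtracting) two polynomials of degree $O(nd)$ with $\widetilde O(n\log M)$-bit coefficients costs $\widetilde O(n^2 d\log M)$ bit operations; multiplying them, via Kronecker substitution to one integer multiplication of $\widetilde O(n^2 d\log M)$-bit integers together with fast integer multiplication, also costs $\widetilde O(n^2 d\log M)$. Multiplying this per-operation cost by the $O(n^4)$ ring operations of the dynamic program yields the claimed $\widetilde O(n^6 d\log M)$ running time.
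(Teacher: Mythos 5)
Your proposal is correct and takes essentially the same route as the paper: treat the Mahajan--Subramanya--Vinay dynamic program as a black box performing $O(n^4)$ ring operations, bound intermediate polynomials by degree $O(nd)$ and coefficient bit-length $\widetilde O(n\log M)$, and charge $\widetilde O(n^2 d\log M)$ per ring operation via fast polynomial multiplication. The paper justifies the degree and coefficient bounds by viewing the algorithm as path-counting in a DAG with $O(n^3)$ vertices, depth $O(n)$, and indegree $O(n)$, whereas you phrase it via the clow-sequence combinatorics, but these are the same bound.
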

\begin{proof}
  The algorithm in \cite{Mahajan99}, Section 4, is described as a weighted {\small DAG} $H_A$ with each vertex corresponding to a state of the dynamic program.
  The weights on the edges are signed entries of the matrix $A$.
  There is an auxiliary starting state $s \in H_A$ and the dynamic programming value for a state $v \in H_A$ is a sum of products of weights along all the paths from $s$ to $v$.

  Moreover, $H_A$ has $O(n^3)$ vertices, depth equal to $O(n)$ and indegree of each vertex equal to $O(n)$.
  Therefore, if the entries of $A$ are polynomials of degree $d$ and coefficients bounded by $M$, then the values of the dynamic programming states are polynomials with a degree bounded by $O(nd)$ and coefficients bounded by $O(n^n M^n)$.
  Hence, by using {\small FFT}, we can perform each arithmetic operation in $\widetilde O(n^2 d \log M)$ time.
  The number of arithmetic operations needed is proportional to the number of edges in $H_A$, which is $O(n^4)$.
  This yields the desired time bound.
\end{proof}

Since we do not need to compute the whole Pfaffian in the Exact Matching problem, but are only interested in the coefficient of the monomial $x^t$ (which conveys the information about matchings of the target weight $t$), we can speed up the computation by a factor of $n$.

\begin{corollary} \label{alg:pfaffian-coefficient}
Given an integer $t$ and an $n \times n$ matrix $A$ of univariate polynomials with integer coefficients bounded by $M$, a coefficient of the monomial $x^t$ in $\pf(A)$ can be computed in $\widetilde O(n^5 t \log M)$ time.
\end{corollary}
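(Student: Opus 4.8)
The plan is to re-run the dynamic program behind Theorem~\ref{alg:pfaffian} inside the truncated polynomial ring $R_t := \mathbb{Z}[x]/(x^{t+1})$ instead of over $\mathbb{Z}[x]$, so that every quantity handled has degree at most $t$. The point that makes this legitimate is that truncation does not disturb the coefficient of $x^t$. Recall from the proof of Theorem~\ref{alg:pfaffian} that the Mahajan et al.\ algorithm is a weighted {\small DAG} $H_A$ whose edge weights are the signed entries $\pm A[i,j]$, that the value it stores at a state $v$ is $\sum_P \prod_{e \in P} w(e)$ over all $s$-to-$v$ paths $P$, and that $\pf(A)$ is the value at the output state. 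Every edge weight $\pm A[i,j]$ is a polynomial with only non-negative exponents, hence so is every partial product; therefore dropping all monomials of degree exceeding $t$ at any stage can never affect a monomial of degree at most $t$ produced later. Phrased algebraically, the truncation map $\pi_t\colon \mathbb{Z}[x]\to R_t$ is a ring homomorphism, so running the whole DP in $R_t$ (after first replacing each input entry $A[i,j]$ by its image $\pi_t(A[i,j])$) yields, at every state, exactly $\pi_t$ of the original value; in particular the coefficient of $x^t$ in $\pf(A)$ is recoverable from the $R_t$-computation.

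Next I would redo the complexity accounting with the degree now capped at $t$. After truncation each state value, like each input entry, is a polynomial of degree at most $t$, i.e.\ a vector of $t+1$ integer coefficients. Exactly as in the proof of Theorem~\ref{alg:pfaffian} -- each state value is a sum of products of $O(n)$ edge weights, because $H_A$ has depth $O(n)$, and each weight has coefficients of absolute value at most $M$ -- the coefficients appearing in the state values are integers of bit-length $\widetilde O(n \log M)$, the $\widetilde O$ absorbing the $\mathrm{poly}\log(n,t)$ factors. An addition in $R_t$ then costs $\widetilde O(t\cdot n\log M)$, and a multiplication in $R_t$ costs the same up to polylogarithmic factors, via an {\small FFT}-based product of the two coefficient vectors (e.g.\ over a sufficiently large prime field, or by Kronecker substitution to a single big-integer multiplication). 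Since the DP performs one ring operation per edge of $H_A$ and $H_A$ has $O(n^4)$ edges, the total time is $O(n^4)\cdot\widetilde O(t\cdot n\log M)=\widetilde O(n^5 t\log M)$, after which we output the $x^t$-coefficient of the value at the output state.

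The step I expect to take the most care is the first one: one has to look into the structure of the Mahajan et al.\ {\small DAG} and confirm that every intermediate quantity is genuinely a polynomial combination of the matrix entries with no negative exponents, which is what makes truncation commute with the entire computation; once that is in place, the rest is routine bookkeeping. A minor caveat worth stating is that $t$ may be exponentially large in the remaining parameters (as it is in the bin-packing application of this corollary), so $\log t$ should be treated as a polylogarithmic factor hidden by $\widetilde O$, in line with the statement of Theorem~\ref{thm:matching}.
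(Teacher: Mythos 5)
Your proposal is correct and is essentially the paper's own argument: run the Mahajan et al.\ dynamic program modulo $x^{t+1}$, observe that truncation is harmless because the computation is a positive-exponent polynomial combination of the entries, and redo the cost accounting with degree capped at $t$ instead of $O(nd)$, giving $O(n^4)$ ring operations at $\widetilde O(t\cdot n\log M)$ each. The extra care you take to justify commuting the truncation with the computation (via the ring homomorphism $\pi_t$) is a sound and welcome elaboration of the paper's terser ``a similar analysis follows.''
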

\begin{proof}
In the algorithm from Theorem \ref{alg:pfaffian}, we can perform all the arithmetic operations modulo $x^{t+1}$. Then, the degree of the polynomials is bounded by $O(t)$ instead of $O(nd)$, and a similar analysis follows.
\end{proof}

\subsection{The algorithm}

We first recall the central lemma of the Mulmuley-Vazirani-Vazirani algorithm, used to deal with possible cancellations caused by varying signs in the Pfaffian definition.

\begin{lemma}[Isolation Lemma, cf.~\cite{Mulmuley1987-mc}] \label{lem:isolation-lemma}
  Let $S$ be a finite set, and let $F \subseteq 2^S$ be a family of subsets of $S$.
  To each element $x \in S$, we assign an integer cost $c(x)$ chosen uniformly and independently at random from $\{1, \ldots, 2 |S|\}$.
  For a subset $S' \subseteq S$, we define a total cost of $S'$ to be $c(S') = \sum_{x \in S'} c(x)$.
  Then,
  \[\mathbb{P}(\text{there is a unique minimum total cost set in } F) \geqslant \frac{1}{2}.\]
\end{lemma}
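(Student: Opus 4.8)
The plan is to prove the Isolation Lemma by the classical union-bound argument over elements, isolating the ``bad'' event that some element is \emph{ambiguous}, meaning it belongs to some minimum-cost set and also lies outside some (possibly different) minimum-cost set. First I would fix the randomness on all elements except one, say $x \in S$, and think of $c(x)$ as the only remaining random variable. Define, over the conditional choice of the costs of $S \setminus \{x\}$, the quantity
\[
\alpha_x = \min_{\substack{A \in F \\ x \notin A}} c(A) \;-\; \min_{\substack{B \in F \\ x \in B}} \big(c(B) - c(x)\big),
\]
(with the convention that an empty minimum is $+\infty$; if either family is empty the element is trivially never ambiguous). The point is that $\alpha_x$ does \emph{not} depend on $c(x)$, and that $x$ is ambiguous precisely when $c(x) = \alpha_x$: if $c(x) < \alpha_x$ then every minimum-cost set must contain $x$, and if $c(x) > \alpha_x$ then no minimum-cost set contains $x$. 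Since $c(x)$ is uniform over a set of size $2|S|$ and is independent of $\alpha_x$, the probability that $c(x)$ hits the single value $\alpha_x$ is at most $\tfrac{1}{2|S|}$.

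Next I would take a union bound over all $|S|$ elements: the probability that \emph{some} element is ambiguous is at most $|S| \cdot \tfrac{1}{2|S|} = \tfrac12$. So with probability at least $\tfrac12$, no element is ambiguous, i.e.\ for every element, either all minimum-cost sets contain it or none do. The final step is to observe that this ``no ambiguous element'' event implies the minimum-cost set in $F$ is unique: if $A$ and $B$ are both of minimum total cost, then for every $x$ they agree on membership (both contain it or both exclude it), hence $A = B$. This yields the claimed bound
\[
\mathbb{P}(\text{there is a unique minimum total cost set in } F) \geqslant \tfrac12.
\]

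The main obstacle — really the only subtle point — is the independence claim: one must be careful that conditioning on the costs of $S \setminus \{x\}$ leaves $\alpha_x$ a \emph{fixed} number while $c(x)$ remains uniform and independent, so that $\mathbb{P}(c(x) = \alpha_x) \le \tfrac{1}{2|S|}$ holds pointwise in that conditioning and therefore also unconditionally. One also has to handle the degenerate cases cleanly (empty $F$, or $x$ in no set of $F$, or $x$ in every set of $F$), but in each of those cases the relevant event is vacuous and contributes probability $0$ to the union bound, so the argument goes through unchanged. Everything else is the standard union bound bookkeeping and does not require any calculation beyond what is sketched above.
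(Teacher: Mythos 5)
The paper states this lemma as a known result and simply cites Mulmuley--Vazirani--Vazirani \cite{Mulmuley1987-mc}; it does not include its own proof. Your proposal reproduces the canonical proof of the Isolation Lemma (the one in the original MVV paper): for each element $x$, condition on the costs of $S \setminus \{x\}$, observe that the threshold $\alpha_x$ is a fixed number independent of $c(x)$, show that $x$ can be ``ambiguous'' (in some but not all minimum-cost sets) only when $c(x)=\alpha_x$, bound this by $\tfrac{1}{2|S|}$, and union-bound over $|S|$ elements. The argument is correct, including the handling of the degenerate subfamilies via the $\pm\infty$ convention. The only caveat worth recording is that, as stated, the lemma implicitly requires $F \neq \emptyset$ (otherwise ``unique minimum'' is vacuously false); your proof correctly shows that with probability $\geqslant \tfrac12$ no element is ambiguous, which yields uniqueness as soon as a minimum exists. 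Since the paper has no proof of its own here, there is nothing further to compare.
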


\noindent
Now we are ready to present the matching algorithm.

\thmmatching*

\begin{proof} We first present the algorithm. Then we argue its correctness and analyse the running time.  

   \vskip 1em \noindent\textbf{Algorithm.}\quad     
   For every $\{u, v\} \in \binom{V}{2}$, let $E_{\{u,v\}} = \{ e \in E : e \text{ connects $u$ and $v$}\}$ denote the set of (parallel) edges between nodes $u$ and $v$.
   For an edge $e \in E$, we use $w(e) \in \mathbb{Z}_{\geqslant 0}$ to denote the weight of $e$.
   Moreover, we assume w.l.o.g.~that $V = [n]$.

   The algorithm works as follows.
   
    \begin{enumerate}
      \item Set $\lambda = 2m^{n}$.
      \item For every $\{i,j\} \in \binom{V}{2}$, assign a cost $c(\{i,j\})$ uniformly at random from $\{1, \ldots, 2\binom{n}{2} \}$.
      \item Set up an $n \times n$ matrix $A$ of univariate polynomials:
      For each $i, j \in [n]$, $i \leqslant j$, put
      \[A[i,j] = \lambda^{c(\{i,j\})} \sum_{\mathclap{e \in E_{\{i,j\}}}} x^{w(e)}, \quad \text{and} \quad A[j,i] = -A[i,j].\]
      \item Compute the coefficient of $x^t$ in $\pf(A)$ using the algorithm from Corollary \ref{alg:pfaffian-coefficient}.
      \item If the coefficient of $x^t$ in $\pf(A)$ is nonzero return {\small YES}, otherwise return {\small NO}.
    \end{enumerate}

  \vskip 1em \noindent\textbf{Correctness.}\quad     
  We use $\coef_{t}(\pf(A))$ to denote the coefficient of $x^t$ in $\pf(A)$.
  For every perfect matching $\mathcal{M}$ in the complete graph $K_n$, let
  \[
     f(\mathcal{M}) = \sgn~\mathcal{M} \cdot \lambda^{c(\mathcal{M})} \cdot \, \# \text{perfect matchings in } G \text{ of weight } t \text{ contained\footnotemark{} in } \mathcal{M}
  \]
  \footnotetext{We say that a matching (in multigraph $G$) is contained in another matching (in the complete graph~$K_n$) if the set of $n/2$ pairs of endpoints is the same for the two matchings.}
  denote the contribution of matching $\mathcal{M}$ to the coefficient $\coef_t(\pf(A))$.
  Now, we have
  \begin{equation}\label{eq:pft}
    \coef_t(\pf(A)) = \sum \bigl\{ f(\mathcal{M}) \bigm\vert \mathcal{M} \text{ perfect matching in } K_n \bigr\}.
  \end{equation}

  Let $F$ be the family of all perfect matchings in $K_n$ that contain a perfect matching in $G$ of weight exactly $t$.
  If $F = \emptyset$, then every summand in (\ref{eq:pft}) is zero. Hence, $\coef_t(\pf(A)) = 0$ and our algorithm answers correctly.

  If $F \neq \emptyset$, then by Isolation Lemma, with probability at least $\nicefrac{1}{2}$, there is only one minimum cost perfect matching $\mathcal{N} \in F$.

  Let $c = c(\mathcal{N})$.
  Observe that the number of perfect matchings in $G$ of weight $t$ that are contained in $\mathcal{N}$ is trivially bounded by $m^n < \lambda$.
  This means that $|f(\mathcal{N})| < \lambda^{c+1}$. In other words, $f(\mathcal{N})$ is divisible by $\lambda^c$, but not by $\lambda^{c+1}$.
  On the other hand, every other summand in~(\ref{eq:pft}) is divisible by $\lambda^{c+1}$, as $\mathcal{N}$ is the unique minimum cost matching.
  Therefore, $\coef_t(\pf(A))$ is divisible by $\lambda^c$, but not by $\lambda^{c+1}$ -- so it cannot be zero.

  If we want to amplify the probability of giving the correct answer to $1 - \nicefrac{1}{n^C}$, for some constant $C > 0$, we repeat the algorithm $C\log n$ times.

  \vskip 1em \noindent\textbf{Time cost analysis.}\quad     
  The time needed to complete steps 1--3 is $O(n^2 + m)$.
  Since the coefficients of the polynomial entries of $A$ are bounded by $2m^{n\cdot2\binom{n}{2}} = 2^{O(n^3 \log m)}$, we get that invoking the algorithm from Corollary \ref{alg:pfaffian-coefficient} takes $\widetilde O(t \cdot n^8 \log m)$ time.
  In total, that yields $\widetilde O(t \cdot n^8 + m)$ time complexity.
\end{proof}

\section{Lower bound}

\thmlowerbound*

\begin{proof}
Given a {\small CNF} formula with $n$ variables and $m$ clauses,\footnote{Note that, thanks to the \emph{sparsification lemma}~\cite{Impagliazzo98}, we can assume that $m = O(n)$.} we will construct $n+1$ instances of Vector Bin Packing such that the formula is satisfiable if and only if at least one of them is a yes-instance. Intuitively, this corresponds to guessing the number of variables set to true in a satisfying assignment. Formally, for $t \in \{0,\ldots,n\}$, the $t$-th Vector Bin Packing instance is a yes-instance if and only if the formula has a satisfying assignment with exactly $t$ variables set to true.

Let us fix $t$. The $t$-th instance consists of $n+2$ items $V = \{v_1, \ldots, v_n, T, F\} \subseteq \Qnn^{m+2}$. The first $n$ items correspond to the $n$ variables; the remaining two items $T$ and $F$ are used to break the symmetry -- in any feasible solution they are necessarily in two different bins, which we call the $T$-bin and the $F$-bin, respectively. Packing item $v_i$ to the $T$-bin corresponds to setting variable $i$ to true, and packing it to the $F$-bin corresponds to setting the variable to false.

The items are $(m+2)$-dimensional. The first $m$ dimensions correspond to clauses, and we will discuss them in a moment. Dimension $m+1$ ensures that $T$ and $F$ go to different bins; we have
$T[m+1]=F[m+1]=1$, and $v_i[m+1] = 0$ for every $i \in [n]$. Dimension $m+2$ ensures that (at most) $t$ items go to the $T$-bin and (at most) $n-t$ items go to the $F$-bin; we have $T[m+2] = (n-t)/n$, $F[m+2]=t/n$, and $v_i[m+2]=1/n$ for every $i \in [n]$.

Now, fix a clause $j \in [m]$. We set
\[v_i[j] = \begin{cases}
  \nicefrac{0}{2n}, & \text{if variable $i$ appears in a positive literal in clause $j$}, \\
  \nicefrac{1}{2n}, & \text{if variable $i$ does not appear in clause $j$}, \\
  \nicefrac{2}{2n}, & \text{if variable $i$ appears in a negative literal in clause $j$}.
\end{cases}\]
Let $n_j$ denote the number of variables that appear negated in clause $j$. We set
\[T[j] = 1 - \frac{t + n_j - 1}{2n}, \quad \text{and} \quad F[j] = 0.\]

This ends the description of the instance. To finish the proof, it remains to show that the above items can be packed into two bins if and only if the formula has a~satisfying assignment with exactly $t$ variables set to true.

Note that there is a natural one-to-one correspondence between (not necessarily satisfying) assignments that set exactly $t$ variables to true and (not necessarily feasible) Vector Bin Packing solutions that are feasible in the last two dimensions. We now show that, for $j \in [m]$, such an assignment satisfies clause $j$ if and only if the corresponding solution is feasible in dimension~$j$. The $F$-bin is never overfull in dimension $j$. To analyse the $T$-bin, let $\alpha$, $\beta$, $\gamma$ denote the numbers of variables set to true that, in clause $j$, appear in a positive literal, do not appear, and appear in a negative literal, respectively. Let $\delta$ denote the number of variables set to false that appear in clause $j$ in a negative literal. Note that $t = \alpha + \beta + \gamma$, and $n_j = \gamma + \delta$. Consider the following chain of equivalent inequalities, starting with the condition saying that the $T$-bin is not overfull in dimension $j$.
\begin{align*}
\alpha \cdot \nicefrac{0}{2n} + \beta \cdot \nicefrac{1}{2n} + \gamma \cdot \nicefrac{2}{2n} & \leqslant 1 - T[j] \\
\beta + 2\gamma & \leqslant t + n_j - 1 \\
\beta + 2\gamma & \leqslant (\alpha + \beta + \gamma) + (\gamma + \delta) - 1 \\
1 & \leqslant \alpha + \delta
\end{align*}
The last inequality states that clause $j$ is satisfied.
\end{proof}

\section{Other applications}
In this section we explain how the techniques presented in our paper can be adapted to also solve Vector Multiple Knapsack and Vector Bin Covering, two closely related problems to the Vector Bin Packing problem. The main difference lies in the reduction to the Exact Matching problem, which has to integrate profits of the items, or the new covering property, respectively. Further, we show that our techniques directly apply to the Perfect Matching with Hitting Constraints problem, leading to an improved running time.

\subsubsection*{Vector Multiple Knapsack}
In Vector Multiple Knapsack, instead of packing all items into the smallest number of bins, we aim to place a subset of items with profits into a fixed number of bins while maximizing the profit of the packed items. Like in Vector Bin Packing, small items hinder us from solving the problem using a polynomial time algorithm for the maximum weight perfect matching. Hence, following Bannach et al.~\cite{Bannach20}, we study the problem parameterized by the number $k$ of small items.

\begin{problem}{Vector Multiple Knapsack with Few Small Items}
  Parameter:
    & the number of \emph{small} items $k$. \\[3pt]
  Given:
    & a set of $n$ items $V = \{v_1, \ldots, v_n\} \subseteq \Qnn^{d}$, \textbf{item profits} $p(v_1), \dots, p(v_n) \in \mathbb{Z}_{+}$,\\
    & a subset of $k$ items $V_S \subseteq V$ such that $V_L = V \setminus V_S$ is \textbf{3-incompatible},\\
    & an integer $\ell \in \mathbb{Z}_+$ denoting the number of unit-sized bins, \\
    & and an integer $P \in \mathbb{Z}_{+}$, denoting the \textbf{goal profit}.\\[3pt]
  Decide:
    & if a subset $V'$ of the items can be partitioned into $\ell$ bins $B_1 \cup \cdots \cup B_\ell = V'$  \\
    & such that $\sum_{v \in B_i} v[j] \leqslant 1$ for every bin $i \in [\ell]$ and every dimension $j \in [d]$, \\
    & and $\sum_{v \in V'} p(v) \geqslant P$.
\end{problem}

To solve the problem, we reduce the instance to the Exact Matching problem as in Section~\ref{sec:PackingToMatching}.
It remains to handle the fact that only a subset of items has to be packed, and that we need to integrate the profits. 
We do so in the following manner: With each edge between $v_1$ and $v_2$ and the weight corresponding to $V'_S \subseteq V_S$, we associate the \emph{cost} of $p(v_1)+p(v_2)+\sum_{v \in V'_s} p(v)$.
Further, we introduce $g = n - 2 \cdot \ell$ new vertices $b_1, b_2 \dots, b_g$, called \emph{blocker} vertices. These vertices serve as ``garbage collectors'' for the items which are not packed in any of the $\ell$ bins, i.e., they match $g$ unpacked items, and by that block them. To do so, for each $V'_S \subseteq V_S$, each large vector $v_i$, and each blocker vertex $b_j$, introduce an edge between $v_i$ and $b_j$ with weight dependent on $V'_S$ as before, and cost $0$. Note that, because of the dummy items introduced in the reduction in Section~\ref{sec:PackingToMatching}, we can assume that each bin in an optimal solution contains exactly two large items (some original, some dummy), so we know that exactly $g = n - 2 \cdot \ell$ large items has to be handled by blockers.

Using Lemma~\ref{lem:edgeWeights}, clearly, each yes-instance of the Vector Multiple Knapsack problem has a prefect matching of weight exactly $k \cdot 2^k + (2^k -1)$ and cost at least $P$ in the above graph, and vice versa. This is due to the equivalence of choosing $\ell$ edges with non-zero costs and the packing of the $\ell$ bins. The remaining items can be matched with the blocker vertices, and all small items are covered due to the weights.

We are left with solving the following matching problem: Given a (multi-)graph with edge weight and edge costs, find a perfect matching with a given total weight and the maximum possible total cost. This can be done with a slight modification of the algorithm of Theorem~\ref{thm:matching}. Indeed, note that the algorithm already looks for a perfect matching minimizing the sum of edge costs coming from Isolation Lemma. All we have to do is to (1) combine input costs with Isolation Lemma costs, and (2) turn minimization into maximization. For (1), it suffices to put the input cost into the most significant bits, and the Isolation Lemma cost into the least significant bits of the combined edge cost. For (2), to find out what the maximum (instead of the minimum) possible total cost is, it suffices to look at the most (instead of the least) significant digit in the $\lambda$-ary representation of the coefficient $\coef_t(\pf(A))$. Last but not least, we remark that Isolation Lemma is symmetric with respect to minimization/maximization, i.e., it also ensures that the maximum total cost set is unique with probability at least $\nicefrac{1}{2}$.

To analyze the running time, let $p_{\max} = \max_{v \in V} p(v)$ denote the maximum item profit. The maximum input cost of an edge is $(k+2) p_{\max}$. Hence, the coefficients of the polynomial entries of matrix $A$ are now bounded by $2m^{(k+2)p_{\max}n\cdot2\binom{n}{2}} = 2^{O(p_{\max}n^4 \log m)}$, and the matching algorithm takes $\widetilde O(t \cdot p_{\max} n^9 \log m)$ time. This leads to the following theorem.

\thmKnapsack*

\subsubsection*{Vector Bin Covering}
Another set of problems asks to \emph{cover} the largest number of bins possible. In the one-dimensional setting, covering typically refers to the bin capacity being exceeded by the set of items packed into it. This property can be extended in multiple ways to a~$d$-dimensional case, for example by requiring that at least one dimension is exceeded. However, other properties, such as ``all dimension have to be exceeded'', ``certain set combinations of dimensions have to be exceeded'', et cetera, are possible as well. Our algorithm works for all such definitions of covering. Thus, in the following, we refer to the one chosen as the covering property $\mathcal P$.

Following our story line to study a parameter capturing a distance to triviality, we consider the problem variant parameterized by the number $k$ of small items. However, the property of being a small item depends on $\mathcal P$, so we introduce a new definition for the covering problems: We say that a subset $V' \subseteq V$ is \textbf{3-covering} if every three distinct items from $V'$ cover a unit-sized bin w.r.t.\ $\mathcal P$.

\begin{problem}{Vector Bin Covering with Few Small Items}
  Parameter:
    & the number of \emph{small} items $k$. \\[3pt]
  Given:
    & a set of $n$ items $V = \{v_1, \ldots, v_n\} \subseteq \Qnn^d$, \\
    & a subset of $k$ items $V_S \subseteq V$ such that $V_L = V \setminus V_S$ is \textbf{3-covering} w.r.t.~$\mathcal P$, \\
    & and an integer $\ell \in \mathbb{Z}_+$ denoting the number of unit-sized bins. \\[3pt]
  Decide:
    & if the items can be partitioned into $\ell$ bins $B_1 \cup \cdots \cup B_\ell = V$ such that \\
    & $\sum_{v \in B_i} v$ satisfies $\mathcal P$ for every bin $i \in [\ell]$.
\end{problem}

The algorithm proceeds similarly to the one for Vector Bin Packing. However, we have to handle the fact that a bin can contain more than two large items in this case. Thus, we first guess the number of bins $\ell_i$ admitting $i$ large items for $i \in \{0,1,2\}$. This yields $O(\ell^3) = O(n^3)$ guesses. The remaining bins will be covered by triples of the unassigned large items. Hence, the guess has to satisfy that $\ell_0+\ell_1+\ell_2+\lfloor(n-k-\ell_1-2\ell_2)/3\rfloor \geqslant \ell$.

Now we construct the graph as in Section~\ref{sec:PackingToMatching} with $2\ell_0 + \ell_1$ dummy items. For each $V'_S \subseteq V_S$, an edge is introduced between $v_1$ and $v_2$ if $v_1 + v_2 +\sum_{v \in V'_S} v$ covers the bin w.r.t.\ $\mathcal P$. The weight of the edge is defined by $V'_S$ as before. Additionally, we introduce $(n-k-\ell_1-2\ell_2)$ \emph{blocker} vertices, and introduce an edge of weight $0$ between each blocker vertex and each large item. The blocker vertices collect all large items not being placed into bins with 0, 1, or 2 large items.

With Lemma~\ref{lem:edgeWeights} being proven, clearly, each yes-instance of the Vector Bin Covering problem has a perfect matching of weight $k \cdot 2^k + (2^k -1)$ in the above graph, and vice versa. Indeed, a matching has to choose $(n-k-\ell_1-2\ell_2)$ edges between blocker vertices and large items. These are the ones greedily packed as triples. Note that this might leave up to two large items unpacked, which will be assigned to an arbitrary, already covered bin. 
The remaining packing is defined by the remaining matching edges as previously. 

This together with Theorem~\ref{thm:matching} leads to the following result.

\thmCover*

\subsubsection*{Perfect Matching with Hitting Constraints}
The Perfect Matching with Hitting Constraints problem asks whether there exists a perfect matching in a graph using at least one edge from each given set of edges. Formally, the problem is defined as follows.

\begin{problem}{Perfect Matching with Hitting Constraints}
  Parameter:
    & the number of edge subsets $k$. \\[3pt]
  Given:
    & a graph $G = \langle V, E \rangle$, \\
    & and $k$ (not necessarily disjoint) edge subsets $E_1,\dots, E_k \subseteq E$.\\[3pt]
  Decide:
    & if there is a perfect matching $M$ in $G$ such that \\
    & there exists $k$ \emph{distinct} edges $e_1, \dots, e_k \in M$ such that $e_i \in E_i$ for every $i \in [k]$.
\end{problem}

We again reduce this problem to finding an exact weight perfect matching in a~multigraph. Our approach is similar to the one of Marx and Pilipczuk~\cite{Marx14}. However, in their reduction, they introduce larger edge weights, and, by that, obtain a larger running time. We can circumvent this using edge weights as defined in Lemma~\ref{lem:edgeWeights}.

In detail, we create a copy of each edge $e \in E_i$, for each $i \in [k]$, and assign weight $1 \cdot 2^k + 2^{i-1}$ to it -- i.e., we concatenate the indicator vector of the singleton $\{i\}$ with the counter set to $1$, as previously. The original edge gets weight $0$. The target weight is $t=k \cdot 2^k + (2^k -1)$. Clearly, there exists a perfect matching with hitting constraints in $G$ if and only if there is a perfect matching in the transformed graph with edge weights summing up to the correct target value~$t$, see Lemma~\ref{lem:edgeWeights}.

This together with Theorem~\ref{thm:matching} leads to the following result.

\thmPMwHC*

\bibliography{main}

\clearpage
\appendix

\section{Alternative algorithm for Vector Bin Packing}
\label{sec:alt}

In this section, we develop an alternative algorithm for Vector Bin Packing with a better dependence on the number of items $n$ than the algorithm from Theorem~\ref{thm:binpacking}.
We follow the approach of Gutin et al.~\cite{DBLP:journals/jcss/GutinWY17} for the Conjoining Bipartite Matching problem that is based on the following lemma by Wahlstr{\"{o}}m \cite{DBLP:conf/stacs/Wahlstrom13}.

\begin{lemma}[cf. \cite{DBLP:conf/stacs/Wahlstrom13}, Lemma 2] \label{lem:Qpolynomial}
Let $P(x_1, \ldots, x_n)$ be a polynomial over a field of characteristic $2$. 
For a set $I \subseteq \{x_1,\ldots,x_n\}$, define $P_{-I}(x_1, \ldots, x_n) = P(y_1, \ldots, y_n)$ where $y_i = 0$ if $x_i \in I$, and $y_i = x_i$ otherwise.
For a $J \subseteq \{x_1,\ldots,x_n\}$ define
$$
\Phi_J(P) = \sum\limits_{I \subseteq J} P_{-I}.
$$
For a monomial $T$ and a polynomial $Q$ let us use\,\ $\coef_T(Q)$ to denote the coefficient of $T$ in $Q$.
Then, for any monomial $T$ we have
\[\coef_T(\Phi_J(P)) = \begin{cases}
  \coef_T(P), & \text{if~} T \text{~is divisible by~} \prod_{x_i \in J} x_i, \\
  0, & \text{otherwise}.
\end{cases}\]
\end{lemma}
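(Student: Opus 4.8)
The plan is to exploit linearity to reduce to the case where $P$ is a single monomial, and then to read off the characteristic-$2$ cancellation one monomial at a time. First, note that for a fixed $I$ the operation $P \mapsto P_{-I}$ is linear over the field, since substituting $x_i = 0$ for all $x_i \in I$ acts on each monomial of $P$ independently; hence $\Phi_J = \sum_{I \subseteq J}(\cdot)_{-I}$ is linear, and $\coef_T(\cdot)$ is linear as well. Writing $P = \sum_{T'} c_{T'} T'$ as a sum of monomials, it therefore suffices to understand $\coef_T(\Phi_J(T'))$ for an individual monomial $T'$ and then sum the contributions $c_{T'} \coef_T(\Phi_J(T'))$.

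Next I would analyse $(T')_{-I}$ for a monomial $T'$. Setting to zero any variable that occurs with positive exponent in $T'$ annihilates $T'$, while setting to zero variables absent from $T'$ leaves $T'$ unchanged; thus $(T')_{-I} = T'$ if $I$ is disjoint from the set of variables appearing in $T'$, and $(T')_{-I} = 0$ otherwise. In particular $\Phi_J(T')$ is a scalar multiple of $T'$ alone, so $\coef_T(\Phi_J(T')) = 0$ unless $T = T'$. Combining this with the linearity step gives $\coef_T(\Phi_J(P)) = \coef_T(P) \cdot N_T$, where $N_T$ denotes the image in the field of the number of subsets $I \subseteq J$ that are disjoint from the set of variables occurring in $T$.

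Finally I would evaluate $N_T$. Let $S_T = \{ x_i \in J : x_i \text{ occurs in } T \}$. The subsets $I \subseteq J$ disjoint from the variables of $T$ are exactly the subsets of $J \setminus S_T$, so there are $2^{|J| - |S_T|}$ of them; over a field of characteristic $2$ this quantity is $0$ unless $S_T = J$, in which case it equals $2^0 = 1$. Since $S_T = J$ holds precisely when every variable in $J$ occurs with positive exponent in $T$, i.e.\ when $T$ is divisible by $\prod_{x_i \in J} x_i$, substituting $N_T = 1$ in the divisible case and $N_T = 0$ otherwise into $\coef_T(\Phi_J(P)) = \coef_T(P) \cdot N_T$ yields the claimed case distinction. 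The argument is short and has no real obstacle; the only points deserving care are making the reduction to monomials explicit (so that the cancellation can be checked monomial by monomial) and keeping the distinction between ``variables occurring in $T$'' and ``variables in $J$'' straight throughout.
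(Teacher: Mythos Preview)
Your argument is correct. Note, however, that the paper does not supply its own proof of this lemma: it is quoted from Wahlstr\"om's work (the ``cf.'' citation in the lemma header), and no proof environment follows the statement in the paper. So there is nothing to compare against on the paper's side. Your proof---reduce by linearity to a single monomial, observe that $(T')_{-I}$ is either $T'$ or $0$, count the surviving subsets as $2^{|J\setminus S_T|}$, and use that this is $1$ in characteristic~$2$ exactly when $S_T=J$---is the standard short justification of this inclusion--exclusion / sieving identity and would be perfectly acceptable as a self-contained proof.
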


We are also going to use the classic Schwartz-Zippel lemma:

\begin{lemma}[Schwartz-Zippel, \cite{DBLP:journals/jacm/Schwartz80, DBLP:conf/eurosam/Zippel79}]
Let $P(x_1, \ldots, x_n)$ be a multivariate polynomial of maximum degree at most $d$ over a field\:\ $\mathbb{F}$, and assume that $P$ is not identically equal to zero.
Pick $r_1, \ldots, r_n$ uniformly at random from $\mathbb{F}$.
Then $\mathbb{P}(P(r_1, \ldots, r_n) = 0) \leqslant d/|\mathbb{F}|$.
\end{lemma}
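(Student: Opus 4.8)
The plan is to prove the statement by induction on the number of variables $n$. For the base case $n = 1$, the polynomial $P(x_1)$ is a univariate polynomial of degree at most $d$ that is not identically zero, so it has at most $d$ roots in the field $\mathbb{F}$; hence a uniformly random $r_1 \in \mathbb{F}$ is a root with probability at most $d/|\mathbb{F}|$. For the inductive step, I would write $P$ as a polynomial in $x_n$ with coefficients in $\mathbb{F}[x_1, \ldots, x_{n-1}]$, that is,
\[
P(x_1, \ldots, x_n) = \sum_{i=0}^{k} x_n^{i} \cdot Q_i(x_1, \ldots, x_{n-1}),
\]
where $k \leqslant d$ is the largest exponent of $x_n$ that actually occurs, so that $Q_k$ is not identically zero and $\deg Q_k \leqslant d - k$.

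Next I would condition on whether $Q_k(r_1, \ldots, r_{n-1}) = 0$. By the inductive hypothesis applied to the $(n-1)$-variable polynomial $Q_k$, we have $\mathbb{P}(Q_k(r_1,\ldots,r_{n-1}) = 0) \leqslant (d-k)/|\mathbb{F}|$. On the complementary event, once $r_1, \ldots, r_{n-1}$ are fixed with $Q_k(r_1,\ldots,r_{n-1}) \neq 0$, the polynomial $P(r_1, \ldots, r_{n-1}, x_n)$ is a univariate polynomial in $x_n$ of degree exactly $k$ (its leading coefficient is nonzero), hence not identically zero, and by the base-case argument $\mathbb{P}(P(r_1,\ldots,r_{n-1},r_n) = 0 \mid r_1,\ldots,r_{n-1}) \leqslant k/|\mathbb{F}|$. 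Combining these via a union bound over the two events,
\[
\mathbb{P}(P(r_1,\ldots,r_n) = 0) \leqslant \frac{d-k}{|\mathbb{F}|} + \frac{k}{|\mathbb{F}|} = \frac{d}{|\mathbb{F}|},
\]
which completes the induction.

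The only point requiring a little care is the choice of $k$: one must pick the \emph{top} degree in $x_n$ among monomials with nonzero coefficient so that $Q_k \not\equiv 0$ and the degree bound $\deg Q_k \leqslant d - k$ holds simultaneously; if no $x_n$ appears at all ($k = 0$), the argument degenerates gracefully, since then $P = Q_0$ depends only on $x_1, \ldots, x_{n-1}$ and the inductive hypothesis applies directly with bound $d/|\mathbb{F}|$. I do not anticipate a genuine obstacle here — the main (minor) subtlety is just bookkeeping the degree split $d = (d-k) + k$ so that the two conditional bounds add up exactly to $d/|\mathbb{F}|$.
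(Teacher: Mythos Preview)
Your proof is correct and is the standard induction argument for the Schwartz--Zippel lemma. Note, however, that the paper does not give its own proof of this statement: it merely cites it as a classical result from \cite{DBLP:journals/jacm/Schwartz80, DBLP:conf/eurosam/Zippel79}, so there is no paper proof to compare against.
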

 
\begin{theorem}
There is a randomized Monte Carlo algorithm solving Vector Bin Packing
with Few Small Items in $O(2^k n^{\omega + o(1)} + 2^k k n^2\operatorname{polylog}(n))$ time.
\end{theorem}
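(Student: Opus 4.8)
The plan is to follow the Vector-Bin-Packing-to-Exact-Matching reduction of Lemma~\ref{lem:packtomatch}, but instead of invoking the Pfaffian-based algorithm of Theorem~\ref{thm:matching}, to solve the resulting Exact Matching instance via an algebraic sieving argument in the style of Wahlstr\"om~\cite{DBLP:conf/stacs/Wahlstrom13} and Gutin et al.~\cite{DBLP:journals/jcss/GutinWY17}. Recall that after the reduction we have a multigraph $H$ on $O(n)$ vertices with $O(2^k n^2)$ edges, non-negative integer edge weights bounded by $O(2^k k)$, and we want to know whether $H$ has a perfect matching of total weight exactly $t = k\cdot 2^k + (2^k-1)$. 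First I would encode weights by a single indeterminate $z$: assign to an edge $e$ of weight $w(e)$ the monomial factor $z^{w(e)}$. Then, working over a field $\mathbb F$ of characteristic $2$ (so that signs disappear and a perfect matching polynomial can be written as a permanent/Pfaffian without cancellation among matchings that are equal), form the Tutte-style matrix $B$ with $B[i,j] = x_{ij}\, z^{w(e_{ij})}$ (summed over parallel edges, but for characteristic-$2$ sieving it is cleaner to first split parallel edges using a few extra indeterminates, or simply keep one variable per pair and a polynomial-in-$z$ entry). The key point is that $\det(B)$ — equivalently $\pf$ of the associated skew matrix in characteristic $2$ — restricted to square-free monomials in the $x_{ij}$'s enumerates exactly the perfect matchings, and the coefficient of $z^t$ picks those of the right weight.

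The second ingredient is the small-item bookkeeping. Here I would introduce one indeterposed variable $y_s$ per small item $s \in V_S$, and multiply the monomial of an edge carrying small-item-subset $V_S'$ by $\prod_{s \in V_S'} y_s$. A partition of $V_S$ among the matched pairs then corresponds precisely to those terms divisible by $\prod_{s\in V_S} y_s$ and square-free in the $y_s$'s; combined with the weight check via $z^t$ this is exactly the Exact Matching condition, and Lemma~\ref{lem:edgeWeights} guarantees the two conditions are equivalent (so in fact the $z^t$ check alone suffices once we know we restrict to square-free monomials — but carrying the $y_s$'s makes the sieve transparent). Now apply Lemma~\ref{lem:Qpolynomial} with the variable set $J$ being (a subset of) the edge variables and the $y_s$'s that must appear: computing $\Phi_J(P)$ amounts to summing $P$ over the $2^{|J|}$ assignments that zero out subsets of $J$, and it isolates exactly the coefficient on monomials divisible by $\prod_{x_i\in J} x_i$. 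Crucially, $|J|$ only needs to scale with $k$ (the small items, and the structural "use each pair at most once" constraints), not with the $\Theta(2^k n^2)$ edges, which is what keeps the outer factor at $2^k$ rather than $2^{O(k)}$ blowing up further; after the sieve, each of the $2^{|J|} = O(2^k)$ evaluations is a determinant of an $O(n)\times O(n)$ matrix over $\mathbb F$, costing $n^{\omega + o(1)}$, and we only need to track the $z$-degree up to $t$, i.e.\ work in $\mathbb F[z]/(z^{t+1})$, which contributes the $2^k k n^2 \operatorname{polylog}(n)$ term from FFT-based polynomial arithmetic of degree $t = \Theta(2^k k)$ on the $\Theta(2^k n^2)$ edges / $n^2$ matrix entries. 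Finally, Schwartz-Zippel over a field of size $\operatorname{poly}(n, t)$ controls the one-sided error of the randomized evaluation.

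The main obstacle I expect is making the sieve variable set $J$ genuinely of size $O(k)$ rather than $O(k + \log n)$ in the exponent, or worse. The naive approach — one $x_{ij}$ per edge-pair, sieved for square-freeness — would need $|J| = \Theta(n^2)$, which is catastrophic. The fix, following Gutin et al., is to observe that one does \emph{not} need square-freeness over all edge variables: working in characteristic $2$, the determinant already kills any monomial in which some variable appears with an even multiplicity in a way that pairs up equal permutation terms, and the genuine perfect-matching terms are automatically multilinear in a single variable per \emph{vertex} (a Tutte matrix with $x_i$ on row $i$ rather than $x_{ij}$ on each entry), so one only sieves to force the $k$ constraint-variables (one per small item, plus a handful encoding that dummy/blocker structure) to appear — giving $|J| = O(k)$. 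Getting this Tutte-matrix encoding to correctly interact with the multigraph weights $z^{w(e)}$ and with parallel edges (which in characteristic $2$ can silently cancel) is the delicate part, and is exactly why the paper notes this alternative "presents a more complicated and tailored approach". Once the encoding is set up correctly, the running time is the product of $2^{|J|} = O(2^k)$ sieve evaluations, the $n^{\omega+o(1)}$ cost of one determinant, plus the $2^k k n^2 \operatorname{polylog}(n)$ cost of degree-$\Theta(2^k k)$ polynomial arithmetic across all matrix entries, yielding $O(2^k n^{\omega+o(1)} + 2^k k n^2 \operatorname{polylog}(n))$ as claimed; correctness is one-sided Monte Carlo via Schwartz-Zippel with a large enough field.
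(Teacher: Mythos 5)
Your proposal has the right high-level strategy (characteristic-$2$ sieving \`a la Wahlstr\"om and Gutin et al., with $O(2^k)$ determinant evaluations and Schwartz-Zippel), which matches the paper's plan, but the concrete implementation has problems that break the running-time claim, and the one technical step that actually produces the $2^k k n^2\operatorname{polylog}(n)$ term is missing.

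First, the $z^{w(e)}$ encoding is a wrong move, not a harmless convenience. If you retain the integer-weight encoding and want to read off the coefficient of $z^t$ with $t=\Theta(2^k k)$, you must compute determinants over $\mathbb F[z]/(z^{t+1})$, which multiplies the cost of each determinant by $\widetilde O(t)$. Over $2^k$ sieve evaluations that gives $\widetilde O(2^k\cdot n^{\omega}\cdot 2^k k)=\widetilde O(4^k k n^{\omega})$, not the claimed bound. The paper's alternative algorithm avoids this precisely by \emph{discarding} the integer weights of Lemma~\ref{lem:packtomatch} entirely: an edge labelled by $J\subseteq[k]$ gets the monomial $z_e\prod_{s\in J}x_s$, and the entire weight check is replaced by the purely multilinear condition ``monomial divisible by $\prod_{i}x_i$,'' which Lemma~\ref{lem:Qpolynomial} handles without any degree-$t$ polynomial arithmetic. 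You half-notice this (``the $z^t$ check alone suffices'') but actually have the direction backwards: it is the $y_s$-sieve alone that suffices; carrying $z^t$ is what poisons the runtime.

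Second, and independently, the $2^k k n^2\operatorname{polylog}(n)$ term is unaccounted for in your analysis. You attribute it to ``FFT-based polynomial arithmetic of degree $t$,'' but (a) that is wrong as discussed above, and (b) even if you dropped $z$, there is a real cost you have not addressed: constructing the $2^k$ sieved matrices $B_I$. Naively, each entry of each $B_I$ is a sum over up to $2^k$ edge labels $J$ with $J\cap I=\emptyset$, which would cost $4^k n^2$. The paper's key technical observation is that, for fixed $(i,j)$, the map $I\mapsto B_I[i,j]$ is the zeta transform of the edge-label values at $(i,j)$, so all $2^k$ values can be computed in $O(2^k k)$ field operations per entry, for $O(2^k k n^2\operatorname{polylog}(n))$ total. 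Without this, the stated bound is not achieved. Finally, your suggestion of a Tutte matrix with one variable per vertex does not cope with parallel edges in the multigraph (two parallel edges of the same label would silently cancel over $\mathbb F_2$); the paper uses one auxiliary variable $z_e$ per edge, which you may evaluate randomly without sieving over them, and this resolves the multigraph issue you flag but leave unresolved.
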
 

\begin{proof}
Recall the (multi-)graph that we construct in the proof of Lemma~\ref{lem:packtomatch}. 
Let us call it $G$, and assume w.l.o.g.~that the vertex set of $G$ is $[N]$, for $N = O(n)$. Recall that each of the $M = O(2^k n^2)$ edges of $G$ represented a subset of $[k]$. For the purpose of the current proof, we modify the edge weights -- in Lemma \ref{lem:packtomatch}, we set integer weights, but now, to every edge of $G$, we assign a weight that is an appropriately chosen polynomial over a field of characteristic $2$.

More specifically, for every $i \in [k]$, we create a variable $x_i$.
Moreover, for every edge $e$, we create an auxiliary variable $z_e$.
Now, for every pair of vertices $i, j \in [N]$ and every subset $J \subseteq [k]$, we define
\[P_{i, j, J} = \begin{cases}
z_e\prod\limits_{\mathclap{s \in J}} x_s, & \text{if between vertices $i$ and $j$ there is edge $e$ corresponding to $J$}, \\
  
  0, & \text{otherwise}.
\end{cases}\]
Note that we interpret $P_{i, j, J}$ as a polynomial over a finite field $\mathbb{F}$ of characteristic $2$, with the exact size of $\mathbb{F}$ to be determined later.

Next, we construct a matrix $A$ of polynomials over $\mathbb{F}$ in such a way that the Pfaffian of $A$ conveys information about existence of a solution for our Vector Bin Packing instance.
For each $i, j \in [N]$, with $i \leqslant j$, let us put
\[A[i,j] = \sum_{\mathclap{J \subseteq [k]}}  P_{i, j, J},
    \quad \text{and} \quad A[j,i] = -A[i,j].\]

Observe that, thanks to the introduction of $z$-variables, there are no term cancellations in the Pfaffian $\pf(A)$. Therefore, a solution to the Vector Bin Packing instance exists if and only if there exists a monomial in $\pf(A)$ divisible by $\prod_{i=1}^{k} x_i$. By Lemma~\ref{lem:Qpolynomial}, this is equivalent to $\Phi_X(\pf(A))$, for $X = \{x_1,\ldots,x_k\}$, being not identically equal to zero.
That, on the other hand, can be checked using Schwartz-Zippel lemma.
In order to use that lemma, we would like to show how to evaluate $\Phi_X(\pf(A))$ efficiently.

Observe that $\deg \Phi_X(\pf(A)) \leqslant (k+1)n$.
Hence, let us fix $\mathbb{F}$ to be the field of size $2^q$ for $q = \lceil\log(2(k+1)n)\rceil$.
Now, let us choose values $r_1, \ldots, r_k$ and $s_1, \ldots, s_M$ (for $x$-variables and $z$-variables, respectively) uniformly at random from $\mathbb{F}$.
Let us refer to these two vectors of numbers as $\Bar{r}$ and $\Bar{s}$ for brevity.
By Schwartz-Zippel lemma, if $\Phi_X(\pf(A)) \not\equiv 0$, then $\mathbb{P}(\Phi_X(\pf(A))(\Bar{r}, \Bar{s}) = 0) \leqslant 1/2$.

We have 
$$\Phi_X(\pf(A))(\Bar{r}, \Bar{s}) = \sum\limits_{I \subseteq [k]} \pf(A)_{-I}(\Bar{r}, \Bar{s}).$$

It is well known that $\pf(A)^2 = \det(A)$ (for a proof see e.g. \cite{Rote2001-ra}).
Since in a field of characteristic $2$, for any $x \in F$, we have $x = -x$, and also $\sqrt x$ exists and is unique. Therefore, we can write
$$\Phi_X(\pf(A))(\Bar{r}, \Bar{s}) = \sum\limits_{I \subseteq [k]} \sqrt{\det(A)_{-I}(\Bar{r}, \Bar{s})}.
$$

For every $I \subseteq [k]$, let us define an auxiliary $n \times n$ matrix $B_I$ over $\mathbb{F}$ such that
$$
B_I[i,j] = A[i,j]_{-I}(\Bar{r}, \Bar{s})
$$
for every $i, j \in [N]$.
It is easy to see that $\det(A)_{-I}(\Bar{r}, \Bar{s}) = \det(B_I)$. Hence,
\begin{equation}\label{eq:pfdet}
\Phi_X(\pf(A))(\Bar{r}, \Bar{s}) = \sum\limits_{I \subseteq [k]} \sqrt{\det(B_I)}.
\end{equation}

In order to compute the expression \eqref{eq:pfdet}, we need to know the matrices $B_I$ for all $I \subseteq [k]$.
However, there are $2^k$ such matrices and each entry in $A_{-I}$ is a polynomial built of up to $2^k$ monomials, making the naive evaluation far too slow.
Fortunately, for a fixed $i, j \in [N]$, we can compute all the values $B_I[i,j]$ simultaneously and efficiently, using dynamic programming, as follows.

Let us fix $i, j \in [N]$ and define function $f : 2^{[k]} \to \mathbb{F}$ such that, for every $J \subseteq [k]$,
\[f(J) = P_{i, j, J}(\Bar{r}, \Bar{s}).\]
It is straightforward to compute all the values of $f$ in $O(2^k k)$ arithmetic operations.
Note that $\mathbb{F}$ is isomorphic to $F_2[x]$ modulo an irreducible polynomial of degree~$q$. Hence, any arithmetic operation takes $O(q \operatorname{polylog}(q)) = O(\operatorname{polylog}(n))$.
This yields $O(2^k k \operatorname{polylog}(n))$ time for computing all the values of $f$.

Now, let us define function $g : 2^{[k]} \to \mathbb{F}$ such that, for every $I \subseteq [k]$, $g(I) = \sum_{J \subseteq I} f(J)$.
It is now easy to see that $B_I[i,j] = g([k]\setminus I)$.
Further, $g$ is the so called \emph{zeta transform} of $f$, and can be computed in $O(2^k k)$ arithmetic operations~\cite{yates1937design} using dynamic programming. This leads to $O(2^k k \operatorname{polylog}(n))$ time in our case. 

We can therefore compute the functions $f$ and $g$ for all $i,j \in [N]$, and consequently find the matrices $B_I$ for all $I \subseteq [k]$, in $O(2^k k n^2\operatorname{polylog}(n))$ total time.

It is well known that the determinant of an $n \times n$ matrix with entries from a field can be computed in $O(n^{\omega + o(1)})$ arithmetic operations (see, e.g., \cite{DBLP:books/aw/AhoHU74}, Chapter 6), where $\omega < 2.37286$ is the \emph{matrix multiplication exponent} \cite{DBLP:conf/soda/AlmanW21}. 
This means that we can evaluate the expression \eqref{eq:pfdet} using additional $O(2^k n^{\omega + o(1)} )$ time.
\end{proof}

\end{document}